\newcolumntype{+}{!{\vrule width 2pt}}
\newlength\savedwidth
\definecolor{mygreen}{rgb}{0,0.6,0}
\newcommand{\tmin}{t_{\min}}
\newcommand{\tmax}{t_{\max}}
  \newtheorem{theorem}{Theorem}
  \newtheorem{lemma}[theorem]{Lemma} 
  \newtheorem{definition}[theorem]{Definition}
\title{The Curse of Shared Knowledge: Recursive Belief Reasoning in a Coordination Game with Imperfect Information}
\author{Thomas Bolander\textsuperscript{1\Yinyang}, Robin Engelhardt\textsuperscript{2\Yinyang}, Thomas S. Nicolet\textsuperscript{3\Yinyang}} 
\begin{document}

\begingroup
\renewcommand\thefootnote{}\footnotetext{
   \textbf{1} Department of Applied Mathematics and Computer Science, Technical University of Denmark, Richard Petersens Plads, building 324, DK-2800 Lyngby, Denmark

\textbf{2} 
Department of Food and Resource Economics, University of Copenhagen, Copenhagen, Denmark
 
\textbf{3}
Center for Information and Bubble Studies, Department of Communication, University of Copenhagen, Copenhagen, Denmark.

%
%
\Yinyang All authors contributed equally to this work.
}%
\addtocounter{footnote}{-1}%
\endgroup







\maketitle

\begin{abstract}
Common knowledge is crucial for safe group coordination. In its absence, 
humans must 
rely on shared knowledge, 
which is inherently limited in depth and therefore prone to coordination failures,
because any finite-order knowledge attribution allows for an even higher order attribution that may change what is known by whom. 
In three separate experiments involving 802 participants, we investigate the extent to which humans 
can differentiate between common knowledge and $n$th-order shared knowledge. We 
designed a two-person coordination game with imperfect information to simplify the 
recursive game structure  and higher-order uncertainties into a relatable everyday scenario. 
In this game, coordination 
for the highest payoff requires a specific fact to be common knowledge between players. 
However, this fact cannot become common knowledge 
in the game. The fact can at most be $n$th-order shared knowledge for some $n$.
Our findings reveal that even at quite shallow depths of shared knowledge (low values of $n$), players behave as though they possess common knowledge, and 
claim similar levels of certainty in their actions, 
despite incurring significant penalties when falsely assuming guaranteed coordination.
We term this phenomenon `the curse of shared knowledge'. It 
arises either from the players' 
inability to distinguish between higher-order shared knowledge and common knowledge, or from their implicit assumption that their co-player cannot make this distinction.

\end{abstract}

\section{Introduction}
Anybody who has been caught up in a sidewalk dance knows that human coordination without common knowledge about each other's intentions is prone to failure. Thus, cooperating humans have evolved techniques by which to create common knowledge in practice, such as eye contact, conventions, and broadcasted messages. When common knowledge cannot be obtained, humans routinely act upon their private knowledge, or upon their shared reasoning about other minds. While we know much about human limits to such higher-order belief reasoning, little is known about at which point, and under which circumstances, shared knowledge becomes effectively indistinguishable from common knowledge. We investigate this question with a novel recursive coordination game with imperfect information, and show that people consistently act as if they have common knowledge about a fact, even in situations where they share only the shallowest levels of their state of mind.  

Successful group coordination requires complementary choices among group members, which, in turn, requires communication of beliefs and intentions in such a way that they become common knowledge~\cite{fagin1995reasoning}. A fact is said to be \textit{common knowledge} if everyone knows it, and everyone knows everyone knows it, and everyone knows everyone knows everyone knows it, and so on, ad infinitum~\cite{lewis1969convention, clark1981definite, schelling1980strategy, aumann1976agreeing}. If the premise of  ``everyone knows'' is not infinitely nested, but only nested to finite depth, we instead have \emph{shared knowledge}. If there is no nested knowledge about knowledge at all and not everyone necessarily knows the fact, we have \emph{private knowledge}.

Let us illustrate the difference between these notions with an example. Two friends, Agnes and Bertram, are taking different trekking routes on the same mountain, and have agreed to meet at the top. 
The equipment essential for an overnight stay at the top has been divided between their backpacks, so it's crucial for their safety that they meet there before it gets dark. 
On the way to the top, they both observe bad weather approaching, which makes it safer to go back to the base. However, they are both uncertain about whether the other person has seen the bad weather, so they cannot expect the other person to necessarily go to the base. The most dangerous situation is if one of them goes to the base and the other goes to the top, since then one of them is left at the top without the necessary equipment to stay there for the night. 

At this point they both know the fact that ``bad weather approaches'', but don't know whether the other knows. In this situation, we would say that Agnes and Bertram both have \emph{private knowledge} that bad weather is approaching, and since they both know it, it is also \emph{shared knowledge} between them. More generally, a fact is \emph{private knowledge} in a group of agents if some non-empty subset of the agents know the fact. For the special case where everybody in the group knows it, we say that there is \emph{shared knowledge to depth one} (or \emph{first-order shared knowledge}) of the fact~\cite{clark1981definite}. The exact border between what is considered private and shared knowledge vary between authors. Some consider the case where everybody knows a fact to still only be private knowledge, and that for it to be shared knowledge, at least one agent also needs to know that they all know. The notion of shared knowledge we use in this paper, where shared knowledge only requires that everybody knows, is the one originally used by~\cite{clark1981definite}, and also the one often seen in (epistemic) logic~\cite{herzig2015share}. Formal definitions of private, shared and common knowledge as well as a more detailed discussion of our choice of conventions can be found in Appendix \ref{A:definitions}. 

Since Agnes doesn't know whether Bertram knows about the bad weather, she would like to warn him and suggest to meet back at the base. So she sends a text message: ``Bad weather approaching. Let's meet at base.'' However, due to the unstable mobile network signals, she is not certain that the message will go through. Therefore she asks Bertram to confirm that he received the  message. A few minutes later, she receives his confirmation. At this point it has become \emph{shared knowledge to depth two} (or \emph{second-order shared knowledge}) that bad weather is approaching: He knows that she knows, since he received her message, and she knows that he knows, because she received his confirmation. To have \emph{shared knowledge to depth three} (\emph{third-order shared knowledge}), it would additionally be required that A) she knows that he knows that she knows, and B) that he knows that she knows that he knows. In fact, A already holds, since she received (and read) his confirmation. However, B doesn't hold, since he will be uncertain about whether his confirmation was successfully received. Thus we have asymmetry in the level of knowledge of the two agents. If he also asks her to confirm his message, and he receives such a confirmation, then of course he will get to know that she knows that he knows. Then there will be shared knowledge to depth three. However, there will still be a (higher-order) knowledge asymmetry, since Agnes can't be certain that the last message was received.

How many messages back and forth does it take for Agnes and Bertram to coordinate going back to the base? At first it might seem that it is sufficient for both of them to know that at least one of them plans to go to the base. However, that is not so. After the first message has been received, both know that Agnes plans to go to the base, but she is still uncertain whether he knows. And if he doesn't, she might risk leaving him alone at the top. So shared knowledge to depth one is clearly not sufficient. Even shared knowledge to depth two is not sufficient. After he has confirmed receiving her original message, from his perspective it is still entirely possible that she doesn't know that he knows, and hence, she might decide to go to the top to not leave him alone up there. And in this case, he also has to go to the top in order not to leave \emph{her} alone up there. This argument can be generalised to prove that even $n$th-order shared knowledge for any arbitrary large $n$ is insufficient for safe coordination.\footnote{To derive a contradiction, suppose $n$th-order shared knowledge for some $n$ is sufficient to make it safe to go to the base, in the sense of guaranteeing that the other person will also go there. This implies that the successful delivery of the $n$th message is sufficient to guarantee meeting at the base. There must then exist a smallest number $n_0$ such that the successful delivery of the $n_0$th message is sufficient to guarantee meeting at the base. Since $n_0$ is the smallest such number, the successful delivery of the $(n_0-1)$st message is not sufficient to guarantee meeting at the base. Now note that at the moment when the $n_0$th message has been successfully delivered, the sender of this message is still uncertain about whether it was actually received, and hence the sender is only certain that the first $n_0-1$ messages has been successfully delivered.  In other words, the sender of the $n_0$th message considers it possible---even after the successful delivery of the message---that only the first $n_0-1$ messages were successfully delivered, in which case it is \emph{not} safe to go to the base. Hence that person will after having sent the $n_0$th message stil consider it unsafe to go to the base, and will choose to go to the top. This is a contradiction, completing the proof. This proof is stated in rather informal terms, but can be turned into a formal, mathematical proof~\cite{fagin1995reasoning}. The example here is in fact modelled after the coordinated attack problem studied extensively in that book, where it is also used to illustrate that common knowledge is a necessary condition for coordination. We are making a similar point here about the necessity of common knowledge for coordination, just using another example.}  The consequence is that no finite number of messages successfully delivered will guarantee that Agnes and Bertram manage to meet at the base. In order to guarantee meeting at the base, they would need to have shared $n$th-order knowledge for \emph{all} $n$. Shared $n$th-order knowledge of some fact for all $n$ is called  \emph{common knowledge} of the fact (see again  
Appendix \ref{A:definitions} for a more formal definition). The essence of the argument is that any  
plan for Agnes and Bertram to coordinate without common knowledge would leave one stranded at the top in the case of a faulty signal at a critical point in their communication.

In practice, shouldn't it be possible to coordinate meeting at the base after one or two messages being sent back and forth? It is likely that this is what Agnes and Bertram would attempt, at least. The question is whether they are fully aware of the risk of miscoordination in this case? After Agnes has sent the message and Bertram has successfully confirmed receiving it, are they aware that the content of the message is still not common knowledge, and that they still risk miscoordination? Or do they now believe it completely safe to go to the base? The questions here are whether A) they can distinguish between the case of $n$th-order shared knowledge and common knowledge, and whether B) they would act differently in the case of common knowledge than in the case of $n$th-order shared knowledge for some suitable large $n$. 

We have designed a coordination game that has a similar underlying mathematical structure as this trekking example, but cast in a simpler setting, where the higher-order uncertainty is established already when the game is initialised and not, as above, through a series of message passings. In this game, our experiments show that human players indeed try to coordinate on the preferred outcome even when only having low orders of shared knowledge (corresponding to trying to meet at the base after only a  few message passings in the trekking example). However, this often results in severe cases of miscoordination with high penalty, corresponding to the situation in the trekking example where one of the trekkers would leave the other at the top without the necessary equipment. Our game suggests that the answer to B above is sometimes a ``no'': Most players definitely act as if they have common knowledge, even though they don't, i.e., for sufficiently high levels of shared knowledge, they act as in the common knowledge situation. For sufficiently high levels of shared knowledge, they play strategies that would have given them the highest payoff if they had had common knowledge, but since they don't, they end up with payoffs that are significantly lower than what they could have received if they had explicitly distinguished the common knowledge situation from the $n$th-order shared knowledge situation. One might however still speculate that this is a 
conscious choice of strategy, and that the players are nevertheless aware that they don't have common knowledge. However, through follow-up questions and interviews, it seems that the answer to A is also ``no'': Players claim they have common knowledge and that coordination is guaranteed, even when they only have low-order shared knowledge.     

Humans generally find the concept of common knowledge hard to grasp, and even harder to grasp the practical relevance of, due to the unbounded nesting of knowledge. We originally developed our game to illustrate that the difference between $n$th-order shared knowledge and common knowledge can have actual strategic implications, hence helping to understand the notion of common knowledge better (and why $n$th-order shared knowledge for sufficiently high values of $n$ is not sufficient). The game we developed makes it clear that human intuitions about common knowledge can be misleading and may have costly consequences.

\section{The curse of shared knowledge}
Reasoning about the knowledge of others, their reasoning about you, and your reasoning about their reasoning, and so on, is famous in cognitive science for its presumed computational intractability~\cite{van2018parameterized,bolander2023parameterized}. Because of this, coordinating species typically use heuristic shortcuts in order to reduce complex shared knowledge states, such as joint perceptual cues and broadcasted signals~\cite{milgrom1981axiomatic, clark1996using, bradbury1998principles}. Humans may obtain common knowledge via mutually accessible first-order sensory experiences~\cite{tomasello1995joint, lorini2005establishing, bolander2015announcements, gintis2010rationality}, eye contact~\cite{friedell1969structure}, public rituals and conventions~\cite{lewis1969convention}, or salient focal points~\cite{schelling1957bargaining}. What most prominently is believed to distinguish human coordination from other animals, however, is the enormous flexibility by which humans can imagine and articulate the mental states of their peers~\cite{tooby2010groups, harari2014sapiens}. The abilities to blush, to tell jokes, and to write novels testify that humans readily attribute \textit{higher-order} beliefs, intentions, and reasoning capabilities to other people, such as thinking explicitly about the mental states of others who think about the thoughts and beliefs of others and so on---while at the same time appreciating that those thoughts and beliefs can differ from each other and from reality. Such higher-order cognition has seen substantial scientific attention, and has brought about various technical terms such as ``Theory of Mind'' (ToM)~\cite{premack1978does}, ``mentalizing''~\cite{frith2003development}, ``mind reading''~\cite{vogeley2001mind, apperly2010mindreaders}, 
``mind perception''~\cite{gray2011distortions}, ``perspective taking'', and ``social intelligence''~\cite{baron1999social}, which often are used interchangeably for studying the cognitive mechanisms of shared knowledge, but sometimes focus on slightly different ideas and associated meanings~\cite{schaafsma2015deconstructing}, depending on the field of investigation.

Looking at the ToM-literature, conclusions about human belief reasoning abilities are rather heterogeneous~\cite{apperly2009humans, saxe2013theory}: Human reasoning about the reasoning processes of other humans is limited~\cite{premack1978does, gopnik1988children, stahl1995players, nagel1995unraveling, hedden2002you, keysar2003limits, pinker2003language}, contextual, and possibly domain specific~\cite{leslie1992domain, saxe2009theory, heyes2014submentalizing}. Three-year-old children tend to fail in the well-known first-order false belief tasks by falsely assuming that their private information is shared by others, while second-order false belief tasks are mastered around ages 5-7~\cite{perner1985john, sullivan1994preschoolers}. Adults may reliably master up to four orders~\cite{kinderman1998theory}, but still have difficulties ignoring the private information they possess when assessing the beliefs of others, resulting in a \textit{curse of knowledge} bias which can compromise their ability to make predictions about other people’s beliefs and actions~\cite{camerer1989curse, birch2007curse}. When the nested mental states represent a succession of different people, such as ``Alice thinks that Bob thinks that Carol is contemplating the idea that David is thinking about Evelyn'', we have less problems following along\footnote{Especially when those successions of mental states are qualified by psychological attribute words such as `Alice thinks that Bob is mistakenly worrying that Carol is offended by misunderstanding something Dave had said to Evelyn'.~\cite{academian2019unrolling}.
} than when the nested mental states are successions of the same people over and over again, and thus are truly recursive, such as ``I think that you think that I contemplate the idea that you are thinking about me''. We get confused more easily by the latter formulation, since we need to keep track of several representations of ourselves and of the other, each representation differing in its perspective and in the number of mental states it presupposes~\cite{de2019common}. When humans compete or try to detect cheaters, higher-order belief reasoning seems to perform better than lower-order belief reasoning~\cite{goodie2012levels}. In negotiations and other mixed motive situations, where innuendo, threats, bribes and other kinds of indirect propositions are common, humans are very good at the \textit{strategic} use of higher-order belief reasoning, for instance as a means to \emph{prevent} common knowledge in certain groups of agents, or as a means to form specific knowledge alliances~\cite{pinker2007stuff, pinker2008logic, de2017negotiating}. ToM proficiency may also be facilitated by providing games with stepwise increase in ToM~\cite{verbrugge2018stepwise}.

In pure coordination problems, such as pedestrians choosing sides, or people agreeing on new words or on new technical standards, common knowledge is the normatively preferred informational state for all members of the group, because it allows to coordinate on an optimal common equilibrium. If there are no or limited means by which to communicate, however, people face an equilibrium-selection problem for which neither game theory nor the ToM literature has any clear solution. Although some experimental evidence~\cite{curry2012putting} suggests that higher-order ToM reasoning may improve coordination efforts, other work seems to suggest that coordination favours lower orders of ToM sophistication~\cite{ devaine2014theory, de2015higher}. The challenge of tacit coordination is particularly relevant for artificial intelligence research and for social cognitive robotics, where the implementation of ToM-like processes into artificial social agents is believed to be an important step towards reliable human-robot interaction~\cite{erb2016artificial, bolander2018seeing, bard2020hanabi, dissing2020implementing}.

Recently, researchers have investigated whether humans have adapted specifically to recognizing common knowledge as a separate cognitive category, distinct from both private and shared knowledge~\cite{de2019common}. Controlled pure coordination experiments in social settings on market collaboration~\cite{thomas2014psychology}, the bystander effect~\cite{thomas2016recursive}, indirect speech~\cite{lee2010rationales}, self-conscious emotions~\cite{thomas2018common}, and charity~\cite{de2019maimonides}, consistently find that people indeed make strategically different choices under common knowledge conditions (presented in the form of public announcements), compared to situations in which there is only private knowledge (in the form of private messages) or shared knowledge (private messages that elaborate on the depth of knowledge of other participants). Apart from seeing a clear benefit of common knowledge, some of these studies also showed that people have a hard time discriminating between various orders of shared knowledge, and that coordination efforts do not correlate with payoff conditions~\cite{thomas2014psychology}, which is in contrast to the assumptions of standard rational choice theory in which payoffs are expected to be maximized~\cite{becker1976economic}.

Most of these studies are designed one way or the other to determine how far people typically unpack shared knowledge and whether they can recognize common knowledge when presented with it. They show that humans are indeed able to discriminate common knowledge from shared knowledge, and adapt their actions accordingly, at least in cases where it is simple to discriminate between the two states of knowledge. The purpose of our paper is somewhat different. The purpose is not to determine whether players can for instance distinguish $n$th order and $(n+1)$th order shared knowledge, but whether they can in general distinguish shared knowledge (of some order) from common knowledge. In the above-mentioned experiments, common knowledge was achieved via public announcements, and shared knowledge via private announcements in which the depth of knowledge was explicated. In other words, these were experiments in which the distinction between shared and common knowledge was made clear and explicit. The question then remains how good humans are at recognizing the difference between shared and common knowledge when the difference is not made explicit, but has to be deduced? The fact that humans are able to adapt their actions to whether they are in a shared or common knowledge situation might have significantly less practical relevance if humans in general have a hard time to distinguish these in the wild. The results of this paper suggest that humans are indeed not very good at recognizing the difference, in particular that they are prone to mistake $n$th-order shared knowledge for common knowledge---even for relatively low values of $n$. Or, at least, they behave as if they had common knowledge even when they only have $n$th-order shared knowledge for very moderate values of $n$. 
This issue has not been thoroughly investigated in previous research, in part because many existing experimental designs stop after describing 2-3 orders of belief reasoning to the participants, as higher orders require quite convoluted sentences that tend to become incomprehensible and increase experimental error. Or, as in the mountain trekking example, they require reasoning about the consequences of a high number of (message passing) actions that each change the mental state of the involved agents.

The latter has been explored theoretically in the  `electronic mail game' by Rubinstein~\cite{rubinstein1989electronic}, a game version of the mountain trekking example and of the structurally equivalent `coordinated attack problem'~\cite{fagin1995reasoning}. The Rubinstein paper shows that `almost common knowledge' in the sense of $n$th-order shared knowledge for some large $n$, leads to a very different expected player behaviour than `absolute common knowledge'. Essentially his conclusion, translated into the context of the mountain trekking example, is that common knowledge is expected to make the two mountain hikers both go to the base, whereas if there is only $n$th-order shared knowledge for some $n$, they are expected to both meet at the top, independent of $n$ and despite the bad weather condition (resulting in non-maximal payoffs). Rubinstein does a pure game-theoretic analysis of optimal strategic behaviour in the game with no experiments, and only speculates what people playing the game might do. This paper aims at elucidating what actual human intuitions command in situations like this. Which depth of shared knowledge (if any) would be enough to attempt risky coordination that would lead to maximal payoff if successful? 

The electronic mail game and the mountain trekking example are complicated in terms of the dynamics of iterated message passing. In this paper, we devise a novel game in which the higher orders of shared knowledge are not achieved dynamically via actions, but are already present at the beginning of the game, using uncertainty about arrival times. This, we believe, makes the game easier to understand. Letting humans play our game, we have been able to address the previous questions in more detail. Our results indicate that the lack of common knowledge does not defer people from being confident in their ability to coordinate, except at the very lowest depths of shared knowledge. So people behave as if having common knowledge, despite only having a low level of shared knowledge, and despite the significant payoff penalties incurred. 
Our conjecture is that the sole presence of shared knowledge is enough to make participants try to coordinate, and that moderate depths of shared knowledge become effectively indistinguishable from common knowledge. 
Our follow-up interviews seem to confirm this conjecture. 
We call this effect ``the curse of shared knowledge'' because even small depths of shared knowledge raises the participant's expectation of being able to coordinate in spite of repeated payoff penalties for having miscoordinated before.

\section{Experimental Design: Materials and Methods}\label{sect:experimental_design} 
For the main experiment, we recruited a total of 680 participants from Amazon Mechanical Turk (MTurk) to play for a maximum of 10 rounds, making a total of $n=4260$ choices. The MTurk experiments were approved by the the Research Ethics Committee at the Faculty of Humanities, University of Copenhagen, Denmark on 22 February 2019 (see Appendix~\ref{app:mturk-walkthrough} for the detailed formulation of the approval). The online experiments were conducted in the period from the 25th February 2019 to 1st March 2019. After accepting
our `Human Intelligence Task' (HIT) and providing written informed consent (see Appendix \ref{app:mturk-walkthrough} for the exact formulation of the Informed Consent Form), participants from MTurk were put in a `waiting room' until they were paired up with another participant to play the game. More detailed descriptions of the MTurk experimental setup can be found in Appendix~\ref{app:mturk-walkthrough}.    

In addition to the main experiment, we conducted three supplementary classroom experiments, with  respectively 80 students (DTU1: $n=2160$) from the Technical University of Denmark (DTU) taking a course on Artificial Intelligence and Multi-Agent Systems, 42 students (DTU2: $n=1012$) taking an introductory course in Artificial Intelligence, and 12 students (DIS: $n=180$) taking a course on Artificial Intelligence at DIS---Study Abroad in Scandinavia (DIS). The two DTU classroom experiments differed slightly from the MTurk experiment in that the students got an initial endowment of \$30 instead of \$10 given to the MTurk participants, and played 30 rounds instead of 10 (see below). Also, in the DTU classroom experiments, all students were told that they would not receive any monetary rewards for playing the game, but they should still try to do their best. The students also had to answer a few additional post-game questions (see  Appendix \ref{app:dtu-experiments} for a full list of those questions). The DIS classroom experiment also differed in the formulation of the game itself, and the way it was presented to the students, see details further below. 

The experiment is designed as a two-player coordination game with imperfect information. The game is inspired by the structure of the consecutive number riddle, also called the Conway paradox, see e.g.\ \cite{van1980conway,van2015one}. Our game is framed as an everyday situation, where two colleagues arrive at their workplace in the morning, and have to decide whether to meet in the canteen for a morning coffee or go straight to their offices and start working immediately. We call the game the `Canteen Dilemma'. The purpose of framing it in an everyday situation is to attempt to make some of the recursive reasoning easier to comprehend~\cite{meijering2010facilitative, wason1971natural}. The introductory story of the game goes as follows: 
\begin{quote}
\indent
``Every morning you arrive at work between $8{:}10$ am and $9{:}10$ am. You and your colleague will arrive by bus 10 minutes apart. Example: You arrive at $\textbf{8:40 am}$. Your colleague may arrive at $\textbf{8:30  am}$, or $\textbf{8:50 am}$. Both of you like to meet in the canteen for a cup of coffee. If you arrive before 9:00 am, you have time to go to the canteen, but you should only go if your colleague goes to the canteen as well. If you or your colleague arrive at 9:00 am or after, you should go straight to your offices.''
\end{quote}

The game has three possible outcomes in each round: 1) both choose the canteen which we refer to as \emph{coordination into the canteen}; 2) both choose their respective offices which we refer to as \emph{coordination into the offices}; 3) one chooses the canteen and the other chooses the office which we refer to as \emph{miscoordination}. At the beginning of each round, participants are told only their own arrival time, and based on this they will have to decide whether to go to the canteen or to the office. After choosing one of those, participants are asked to estimate their certainty that their colleague will make the same choice (on a five-point Likert scale). 

A fixed participation fee of \$2 is given to all MTurk players. Additional bonuses are calculated as penalties subtracted from the initial endowment of \$10 (MTurk) or \$30 (DTU) in each round. Payoffs are tiered in such a way that a small penalty is deducted for successful coordination into the canteen (achieving the highest payoff), which is doubled for coordination into the offices (achieving the second-highest payoff), while the penalty for miscoordination or forbidden choices, i.e., going to the canteen at 9 am or after, is much larger (up to 921 times larger, meaning a significantly lower payoff than the previous two). This means that if a pair of players miscoordinate often, they risk loosing all their endowment, in which case their game terminates prematurely. Penalties are calculated by a logarithmic scoring rule, which depends on the decision and the certainty estimate by each player. Using penalties instead of bonuses may at first seem an unintuitive way to reward participants, but logarithmic scoring rules have been shown to work well due to their ability to ensure that loss minimization remains central, and that the players' actions represent their actual beliefs~\cite{good1992rational, seidenfeld1985calibration, palfrey2009eliciting, mccutcheon2019favor}. 
As noted in Appendix~\ref{appendix:payoffs}, we find a good match between certainty estimates and choices at arrival times different from those that are prone to miscoordination. This corroborates that players tried to minimize their losses and that they made their choices and certainty estimates as honestly as possible. 

We included payoff examples of both successful and failed coordinations in the instructions shown to the participants before the game started, as well as during the game in each round. Screenshots and full descriptions of the experimental setup can be found in Appendix \ref{app:mturk-walkthrough} and details of the payoff structure are provided in Appendix~\ref{appendix:payoffs}. 

The DIS classroom version of the game experiments were different, since we wanted to test the robustness of our findings against variations in the specific formulations of the game and its setup. Additionally, we wanted to address two specific potential criticisms of the original game setup. First of all, one might speculate that the formulation of the game might create a certain behavioural bias, for instance a bias towards going to the canteen due to a personal preference for coffee over office. Second, when doing experiments on MTurk, the players  in principle cannot be guaranteed that their co-players have seen the same rules, or that they are even human. Hence the rules and the game setup might not be common knowledge, as is otherwise expected. In the DIS classroom experiments, the following rules were presented publicly by being shown on the classroom projector and read aloud (see Appendix~\ref{app:dis-rules-full} for the full version): 
\begin{quote}
\indent
The 2 players have to try to coordinate their actions, and if they are successful, they will both achieve the same positive reward [...]. The positive reward can either be \$1 or \$2. If you’re miscoordinated, you lose all the money you earned so far in the game. [...] 
  
In each round, each player is dealt a face down card. On the face of the card there is a number in the interval 2-10 [...]. The numbers on the two dealt cards are always exactly 1 apart, so if for instance one of the players gets a 3, then the other will either be getting a 2 or a 4. [...] 

When the cards are dealt, each player looks at the number on their own card without showing it to the other. The two players are not allowed to communicate or exchange any kind of information during the game. After each player has inspected their own card, they should hide either a white or a black marble stone in their hand and put the hand on the table. [...] 

Both players then receive rewards depending on the colors of their stones and the numbers on their cards. The white marble stone is worth \$2, and the black marble stone is worth \$1, but you only get to keep your stones (money) if you choose the same stone as the opponent, and additionally, if you both choose the white stones, you only get to keep them if both card numbers are strictly below 9. In more details:  
\begin{enumerate}
  \item[a)] If one player chose a white stone and the other a black stone, then they both lose all the stones (money) they have received in the game so far.    
  \item[b)] If both players chose a black stone, they both receive \$1 (i.e., a black stone). 
  \item[c)] If both players chose a white stone, then they both receive \$2 (i.e., a white stone), but only if both numbers are strictly below 9, meaning they are both in the interval 2-8. If not, the players lose all the money (stones) they have received in the game so far.
\end{enumerate}
\end{quote}
This new version of the game is structurally equivalent to the original, but now formulated in a completely abstract way. This potentially makes it harder to grasp the rules, but avoids the potential problem of players making choices affected by everyday personal preferences. By reading up the rules and showing them on the projector in class solves the issue regarding whether the game rules are common knowledge. The reward structure is different in this version of the game. Instead of logarithmic scoring, we chose a simpler scoring with an even higher penalty (lose everything) for miscoordination (different stones) and ``forbidden'' choices (choosing white when having 9 or 10). This might help making it more clear to the players that choosing white is never worth the risk of miscoordination. 

The overall point of both versions of the game is that the only `safe' strategy is to always coordinate on the choice that gives the lower payoff (going to the offices or choosing black stones). Even if it seems safe to try to coordinate for the higher payoff when arriving sufficiently ahead of 9:00 am or getting a card with a number sufficiently below 9, it is never common knowledge that both arrived before 9:00 am or both have a card below 9, which makes it unsafe to attempt coordinating for the higher payoff.


\section{Game Strategies}
What are the relevant strategies for this game? First note that going to the canteen at $9{:}00$ or after results in the worst possible payoff. So both players should always go to the office if they arrive at $9{:}00$ or after. How about if both arrive strictly before $9{:}00$? If both choose canteen, they getter a better payoff than if both choose office. Now consider a case where you are one of the players, and you arrive at $8{:}50$. Then your colleague will be arriving at either $9{:}00$ or $8{:}40$. If your colleague arrives at $9{:}00$, she has to choose office according to the previous argument, and then you would have to choose office as well to avoid the large penalty of miscoordination. However, if your colleague arrives at $8{:}40$, you may both choose the canteen, and this will lead to the highest payoff. In other words, depending on the arrival time of your colleague, a piece of information that you don't have access to, the best choice is either office or canteen. So which one to choose?

Since the penalty of miscoordination is very high, it would seem best to choose office. What if you then instead  arrive at $8{:}40$? In this case, your colleague either arrives at $8{:}30$ or $8{:}50$. In both cases, you have time to meet for a cup of coffee in the canteen, and doing so will give you the highest payoff. At first, it might seem like an easy choice. However, we just concluded that the best strategy at $8{:}50$ would be to go to the office. So, if you arrive at $8{:}40$ and contemplate that your colleague might arrive at $8{:}50$---and if you believe your colleague would reason as yourself and go to the office at $8{:}50$---you also ought to go to the office at $8{:}40$. This argument can of course be iterated, because if the optimal choice at $8{:}40$ is to go to the office, then the optimal choice at $8{:}30$ must also be to go to the office. In other words, the optimal strategy seems to be to always go to the office, independent of arrival time! And, indeed, so it is. If both players go to the office in all rounds and declare the highest possible certainty in their decisions, they will both leave the experiment with \$9.80, excluding the \$2 participation fee (see again Appendix~\ref{appendix:payoffs} for details of the payoff structure). This is the highest possible payoff that can be guaranteed by any strategy in the game, and very close to the \$10 that the players start out with. As we will see later, the payoffs that people actually get when playing the game are \emph{significantly} lower than this.

The \emph{all-office} strategy described above, where you always decide to go to the office independent of arrival time, is a safe strategy if both players follow it. By safe is meant that there is never any risk of miscoordination, and hence no risk of getting the highest penalty (the penalty for miscoordination is up to \$9.21 in a single round, see Appendix~\ref{appendix:payoffs}). It is actually the \emph{only} safe strategy. The reason is that if at least one of the players, say $a$, has the strategy of going to the canteen at some time $t$ before $9{:}00$, then since they both have to go to the office at $9{:}00$ or later, there must exist at least one pair of arrival times for which the two players are miscoordinated.\footnote{
\label{footnote:no-cutoff}  
We now prove this claim. Suppose $a$ has the strategy of going to the canteen at some time $t$ before $9{:}00$, and that both $a$ and $b$ will go to the office if they arrive at $9{:}00$ or later. To derive a contradiction, suppose $a$ and $b$ are always coordinated, i.e., always make the same choice for any pair of arrival times. Since $a$ chooses to go to the canteen at time $t$, player $b$ also has to go to the canteen at time $t+0{:}10$, since otherwise whey would be miscoordinated when $a$ arrives at $t$ and $b$ at $t+0{:}10$. But if $b$ goes to the canteen at time $t+0{:}10$, $a$ also has to go to the canteen at time $t+0{:}20$, since otherwise they would be miscoordinated when $b$ arrives at $t+0{:}10$ and $a$ at $t+0{:}20$. This can be generalized to conclude  that $a$ would have to go to the canteen at any time $t+0{:}20x$ for $x \geq 0$ and $b$ would have to go to the canteen at any time $t+0{:}10+0{:}20y$ for $y \geq 0$. Clearly this implies going to the canteen after $9{:}00$, which is a contradiction.}

The fact that the all-office strategy is the only safe one is counter-intuitive to most people before being presented with the proof, and for some people even after. The issue is that, intuitively, it would seem to be safe to go to the canteen at, say, $8{:}30$. Why would a person/player, say Agnes, ever go to the office that early? She knows that her co-player/colleague, say Betram, will then be arriving at latest at $8{:}40$, which still leaves plenty of time to get a cup of coffee together before $9{:}00$. The issue is of course that if Agnes takes the perspective of Bertram, then assuming he arrives at $8{:}40$, he will consider it possible that she arrives at $8{:}50$. And if she had indeed arrived at $8{:}50$, she would consider it possible that he arrives at $9{:}00$. In that case she would be forced to choose the office. A major point of our experiments is to test whether this kind of recursive perspective-taking is utilized by human players of the game.

The argument of the all-office strategy being safe of course relies on the other player following the same strategy. Since we don't allow players to agree on a strategy with their co-player beforehand, the all-office strategy doesn't necessarily in practice lead to the highest payoff for a particular player. Another issue is that one might decide to play risky instead of safe. Consider the \emph{canteen-before-9} strategy of always going to the canteen before $9{:}00$ and going to the office at later times, all with the highest certainty estimate. If both players choose this strategy and are fortunate to play 10 rounds without any of them arriving  at $9{:}00$ or later, they will get the highest possible payoff of \$9.90---slightly higher than the guaranteed payoff \$9.80 of the all-office strategy. However, if all pairs of arrival times are equally likely, the probability of miscoordination is then $1/6$ (there are 12 pairs of arrival times in total, and 2 of those have one player arriving at $8{:}50$ and the other at $9{:}00$). Miscoordination with the highest possible certainty estimate gives a penalty of \$9.21, so in practice this strategy is of course still significantly worse than the all-office strategy, even if only playing one round (the expected payoff of the canteen-before-9 strategy in a single-round game will be $\$10.00 - \$9.21\cdot 1/6 - \$0.01\cdot 5/6 = \$8.46$, whereas the all-office strategy guarantees a payoff of $\$10.00-\$0.02 = \$9.98$, cf.\ Appendix~\ref{appendix:payoffs}).

In Appendix~\ref{appendix:formal}, 
we make the reasoning about game strategies formally precise. We show that independently of the particular payoff structure (only using the relative order of the payoffs, not their exact values), there will only be two candidates for the optimal strategy, the all-office strategy and the canteen-before-9 strategy. Which one is then optimal depends on the particular payoff structure and the number of arrival times before and after 9 am. In our specific experiments, with our specific arrival times and payoff structure, the all-office strategy has a significantly higher expected utility than the canteen-before-9 strategy, as already argued. As we will see, the human players in our experiments very rarely play any of these strategies, but seem to believe that it is safe to go to the canteen 
when arriving sufficiently ahead of 9 am (e.g.\ before $8{:}50$), but unsafe when arriving later (e.g.\ at $8{:}50$ or later).

A strategy to always go to the canteen if arriving before some cut-off time $t_c$ and always go to the office if arriving at or after $t_c$ is called a \emph{cut-off strategy} (with cut-off $t_c$). The canteen-before-9 strategy is a cut-off strategy with cut-off $9{:}00$ (see Appendix~\ref{appendix:formal} for more details).

\section{Results}
\label{S:results}
\begin{table}
\caption{} 

\smallskip
\centering
\setlength{\arrayrulewidth}{1pt}
\setlength{\tabcolsep}{4pt}
\begin{tabular}{l|ccccccc}
\hline
\small Treatment   & \small   N &   \small MaxRounds &  \small  avg. rounds &  \small  n &  \small Ruin & \small   payoff & \small   Avg penalty \\
\hline
\small AMT         & 680 &          10 &          6.3 & 4260 &  52.79 &    23.59 &         -1.59 \\
\small DTU1        &  80 &          30 &         27    & 2160 &  17.5  &    27.03 &         -0.83 \\
\small DTU2        &  42 &          30 &         24.1  & 1012 &  30.95 &    24.11 &         -0.98 \\
\hline  \multicolumn{7}{c}{} \\[-2mm]
\multicolumn{8}{p{12cm}}{Aggregate statistics of all three experiments. $^*$N = number of subjects; pairs = number of pairs; R = maximum number of rounds; $\bar r$ = average number of rounds played; n = number of choices; Ruin = percentage of participants loosing all their bonus before (or in) round R; Payoff = average earnings (given as the retained percentage of the initial endowment); $\bar s$ = average penalty per player per round.} \\
\end{tabular}
\label{table:1}
\end{table}
The maximal theoretical payoff described in the previous section was never observed in the experiments---actually quite far from it, despite doing the experiment with more than $800$ people. Recall that the payoff of the all-office strategy is $\$9.80$ independent of arrival times. The average bonus paid to our MTurk participants was a mere $\$2.36$. Due to the penalty-based payoff structure, only 46 out of $340$ MTurk pairs ($14\%$) were able to play $10$ rounds and still have any bonus left, while the average number of rounds played was $6.3$, see Table~\ref{table:1}. As soon as one of the players had no money left, the game would terminate.

Comparing the MTurk experiment with the DTU experiments in Table~\ref{table:1}, shows that the DTU participants were slightly better on average. While more than half of the MTurk participants had lost their initial bonus and had to end the game before the last round, only $18\%$ and $31\%$ of the DTU participants, respectively, had done so. Especially the students from the Artificial Intelligence and Multi-Agent Systems course (DTU1) managed well by retaining $27\%$ of the initial endowment and loosing only $\$0.83$ per round on average. 

\begin{figure} 
	\centering\includegraphics[width=0.8\linewidth]{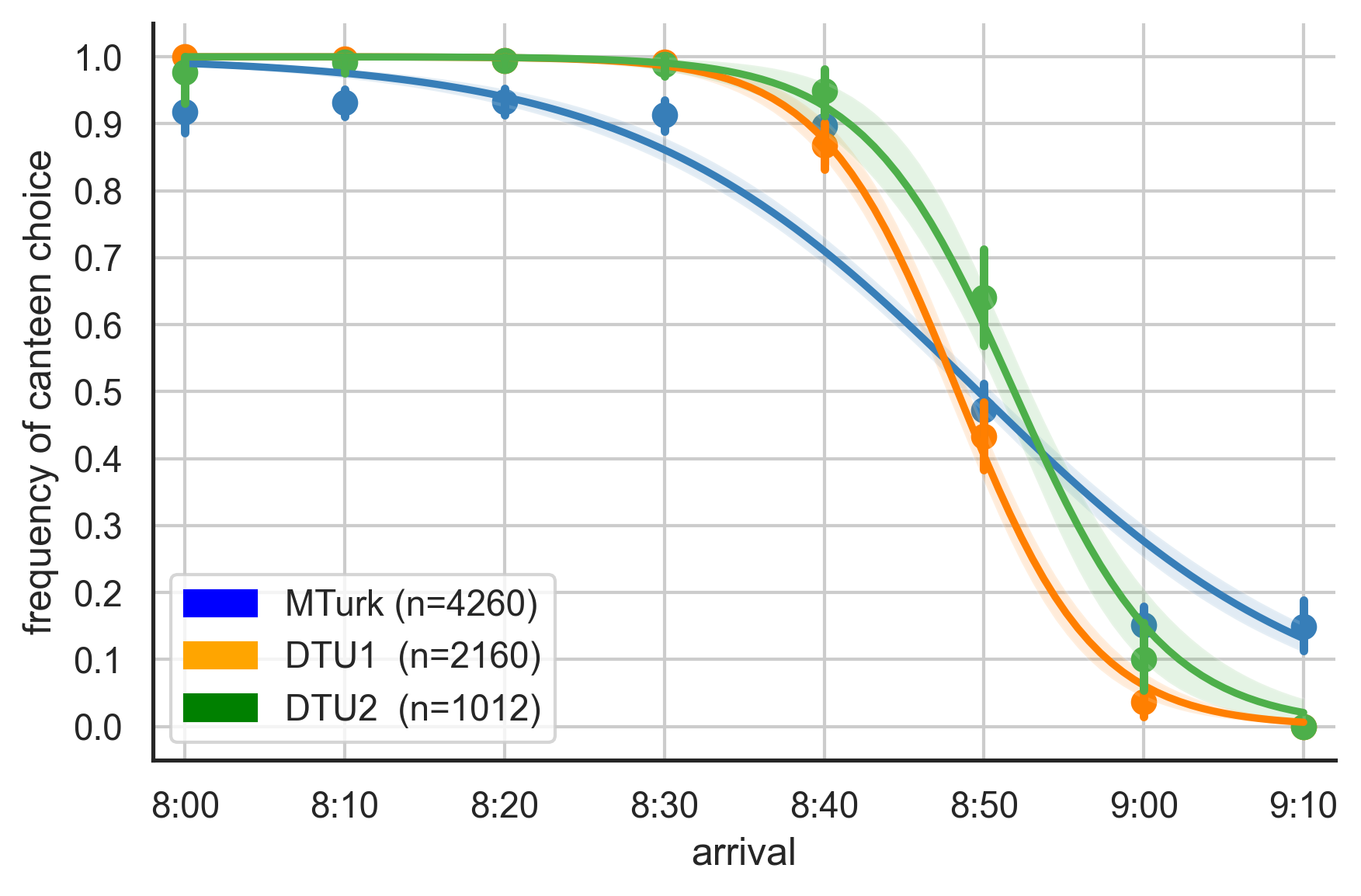}
	\caption{Percentage of canteen choices as a function of arrival times. The colored lines are logistic regression lines with 95\% confidence intervals shown as translucent bands. Fitted parameters show significant differences in the slope and intercept between MTurk and DTU experiments ($p < .0001$).}
	\label{fig:1}
\end{figure}

To model the relationship between arrival time and the probability of choosing the canteen (choice), a logistic regression is performed in Fig.~\ref{fig:1}. The model computes a binomial model with bootstrapped confidence intervals (CI $=95\%$) for the estimated parameters. Bootstrapping is performed with $10.000$ iterations to ensure stable estimation of confidence intervals. MTurk participants in blue have a slightly more gradual decline of canteen choices for increasing arrival times. However, the point at which there is a $50\%$ probability of choosing the canteen or the office is close to $8{:}50$ in all three experiments (see further specfications in an additional mixed-effects regression analysis in Appendix~\ref{appendix:logistic}). Later, in Fig.~\ref{fig:indistinguish}, we make a combined regression analysis of all three experiments in order to understand the overall experimental results in terms of degrees of shared knowledge.

\begin{figure} 
\centering\includegraphics[width=0.8\linewidth]{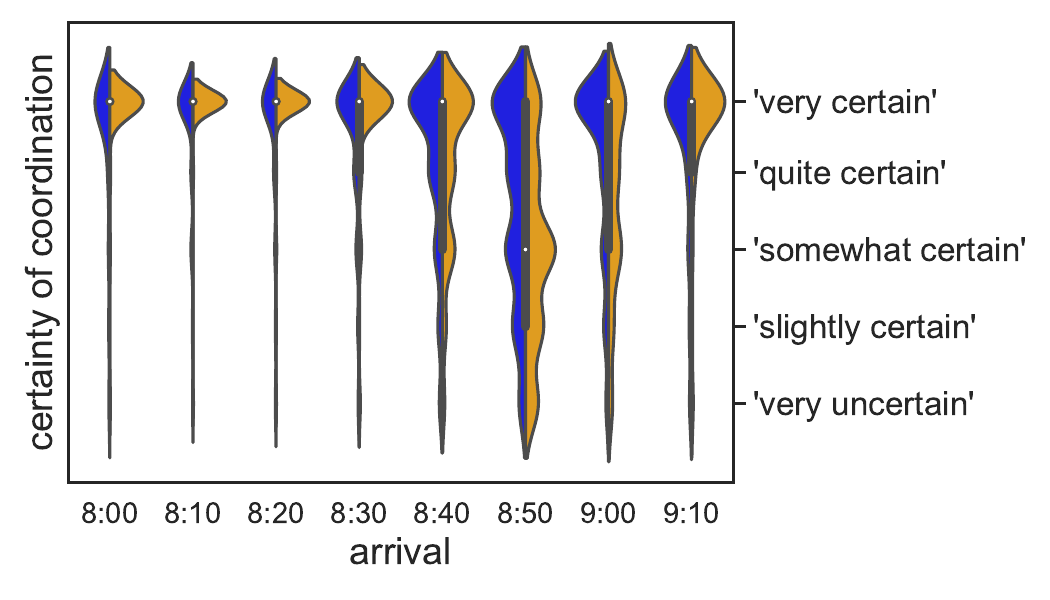}
\caption{Violin plots of certainty estimates. In each round, participants were asked how certain they were of successful coordination with their colleague. Blue areas show the results from MTurk ($\text{n}=4260$) and orange areas show the results from DTU1 and DTU2 combined ($\text{n}=3172$). We predefined a five point likert scale of certainty estimates as: `very uncertain', `slightly certain, `somewhat certain', `quite certain', and `very certain', and translated them into the numerical values of probability estimates used in the payoff calculations (see Appendix~\ref{appendix:payoffs}). The white dots correspond to the median certainty estimate.}
\label{fig:certain}
\end{figure}

The violin plots in Fig.~\ref{fig:certain} show the distribution of certainty estimates for each arrival time. The DTU experiments are merged and compared with the MTurk experiment, as the differences between the two DTU experiments generally are smaller than their respective difference to the MTurk experiment. This is quantified in Table~\ref{table:1}, as the calculated average penalties per round are a direct consequence of the certainty estimates given by the participants. In general, however, Fig.~\ref{fig:certain} shows that it is exceedingly rare for any of the participants to consider it problematic to go to the canteen when arriving early. Arriving at $8{:}30$ or earlier is deemed sufficiently early to visit the canteen with very high confidence. 
The subtle differences in certainty estimates between MTurk participants and DTU students show that the latter tend to be slightly more certain that their co-players follow a similar strategy (higher certainty estimates for the early and late arrival times), and also that they are more aware of the danger of miscoordination (lower certainty estimates at $8{:}50$ and $9{:}00$). This is in particular the case for the DTU1 experiment that has the steepest profile. Being more certain that your co-players follow a similar strategy probably indicates that you believe such a strategy to be optimal. So, interestingly, the DTU1 participants are both the ones that appear to be most aware of the danger of miscoordination, and at the same time those who most firmly believe a cut-off strategy is optimal, i.e., believing that the risk of miscoordination is unavoidable. The differences between the three experiments are however still relatively minor. In the following we will combine data from all three experiments.

\begin{figure} 
\centering\includegraphics[width=0.8\linewidth]{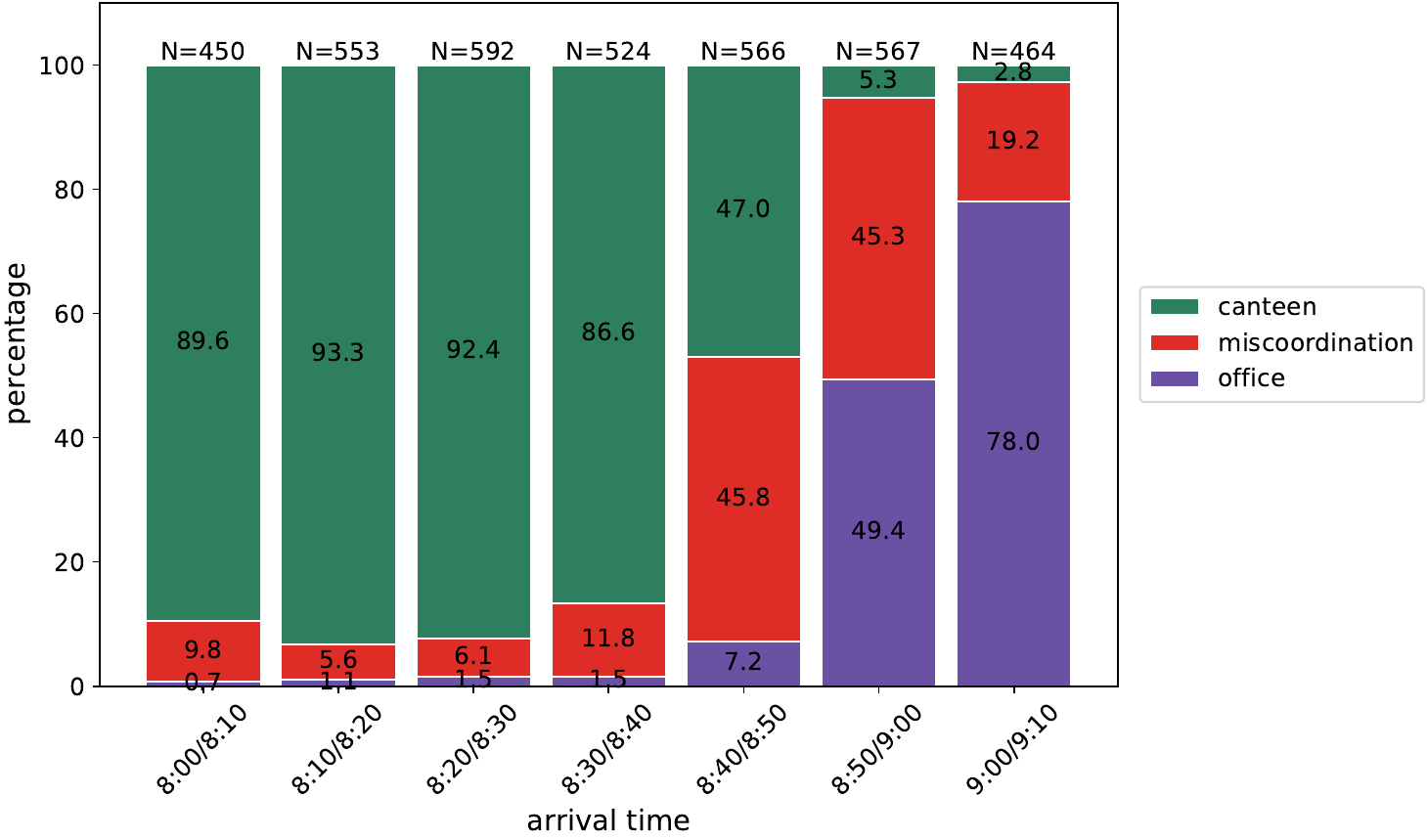}
\caption{Percentage of coordinations and miscoordinations as a function of arrival times. Green means coordinating into the canteen, purple means coordinating into the office, and red means miscoordination. We use the notation 8:00/8:10 to denote the union of the arrival pairs $(8{:}00,8{:}10)$ and $(8{:}10,8{:}00)$, i.e., the arrival time combinations where one of the players arrive at $8{:}00$ and the other at $8{:}10$. Miscoordinations approach $50\%$ at 8:40/8:50 and 8:50/9:00.}
\label{fig:miscoordinations}
\end{figure}

In Fig.~\ref{fig:miscoordinations}, we see the percentage of successful coordinations into the canteen or office (green or purple) together with the number of miscoordinations (red) as a function of all possible arrival time combinations. The figure shows that players are able to coordinate into the canteen more than 86\% of the time if both of them arrive before $8{:}50$. As soon as a pair has a player who arrives at $8{:}50$, however, the result changes drastically. Suddenly almost half of such pairs miscoordinate. As players experience harsh penalties for miscoordinating, one could perhaps expect to see a tendency of choosing office more often when arriving at 8:40 or 8:50 in subsequent rounds. That is, we might expect that players learn and converge to the all-office strategy in order to avoid miscoordination altogether. But this is not what we see.

\begin{figure} 
\centering\includegraphics[width=0.8\linewidth]{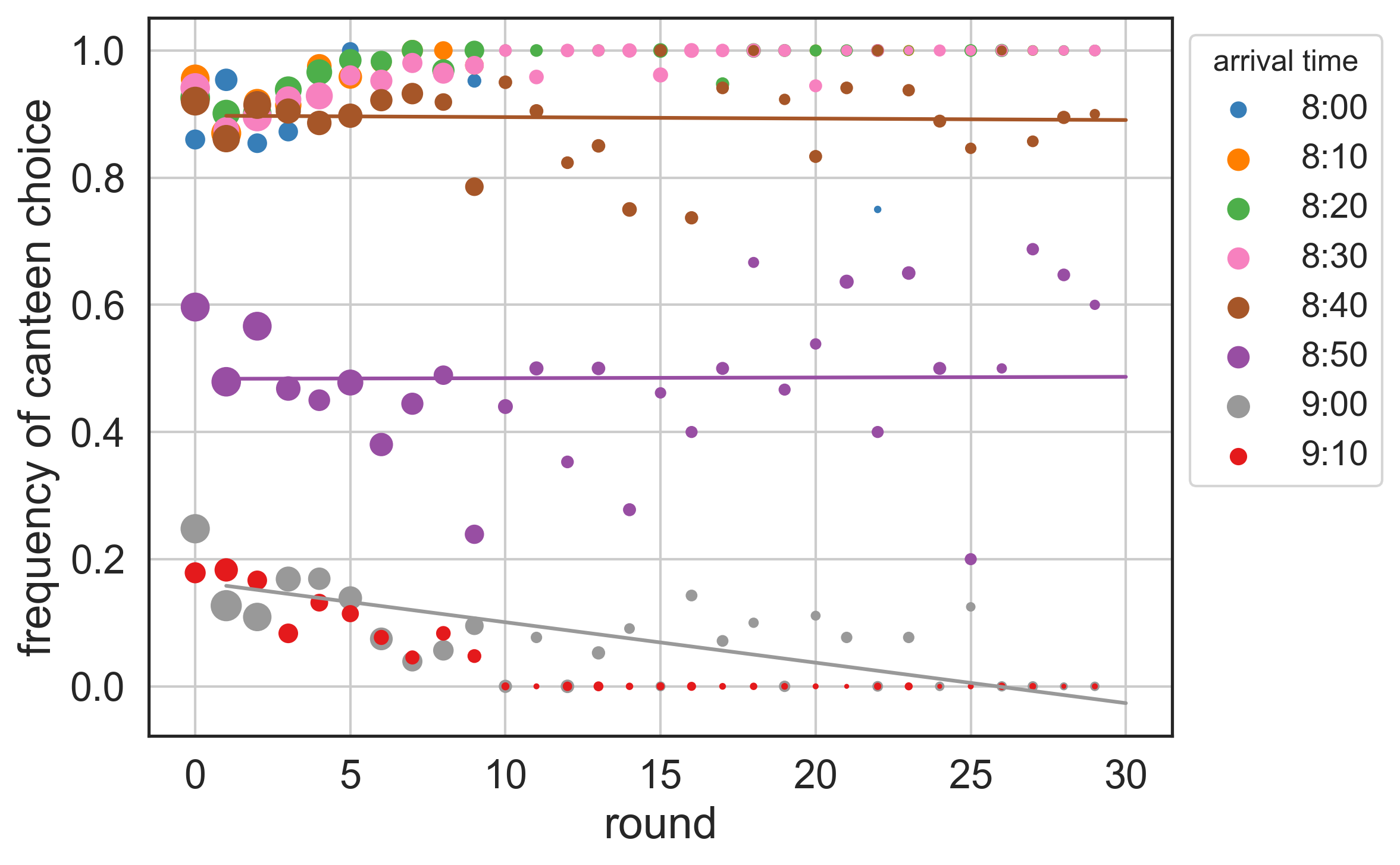}
\caption{Mean frequencies of canteen choices for all possible arrival times as a function of the number of rounds played. The three fitted horizontal lines, corresponding to arrival times $8{:}40$, $8{:}50$, and $9{:}00$ respectively, represent weighted linear squares (WLS) with the weights chosen to be the square root of the number of data points constituting the mean frequencies for each round, also shown by dot size.}
\label{timeseries}
\end{figure}
Fig.~\ref{timeseries} shows the mean frequency of canteen choices as a function of rounds played for all three experiments. Each color corresponds to a certain arrival time. Clearly, the only arrival times that do not converge towards either the canteen or the office are the arrival times of $8{:}40$ and $8{:}50$, with the former fluctuating around $90\%$ canteen choices and the latter fluctuating around $50\%$ canteen choices. This indicates that there is no behavioral change during the game. Participants arriving at $8{:}40$ or $8{:}50$ do not feel incentivized to change their behavior significantly in subsequent rounds, even though there is a high risk of miscoordination. This is not to say that participants do not learn that canteen choices at $8{:}40$ or $8{:}50$ are dangerous. Partitioning the data from Fig.~\ref{fig:1} into two bins, corresponding to pairs having had no miscoordination and pairs having had one or more miscoordinations (see the supplementary data analysis in Appendix~\ref{app:supplementary-data-analysis}), shows somewhat decreasing certainty estimates around the critical arrival times. However, this does not affect their actual choices. MTurk participants do choose the canteen a little less often after a miscoordination (see Fig.~\ref{fig:certainties} in the appendix), but this does not translate into better payoffs as later miscoordinations just move to earlier arrival times. So even though participants learn that their choices are risky, they do not see any way to improve their strategy (at least not in the short term). Specifically, they never converge to the optimal all-office strategy, and also not to the alternative canteen-before-9 strategy (cf.\ Theorem~\ref{theorem:all-office-or-cut-off} in Appendix~\ref{appendix:formal}). This apparent lack of behavioral change in higher-order social reasoning games was also identified by Verbrugge~\cite{verbrugge2008learning}. 

\begin{figure} 
  \centering\includegraphics[width=0.6\linewidth]{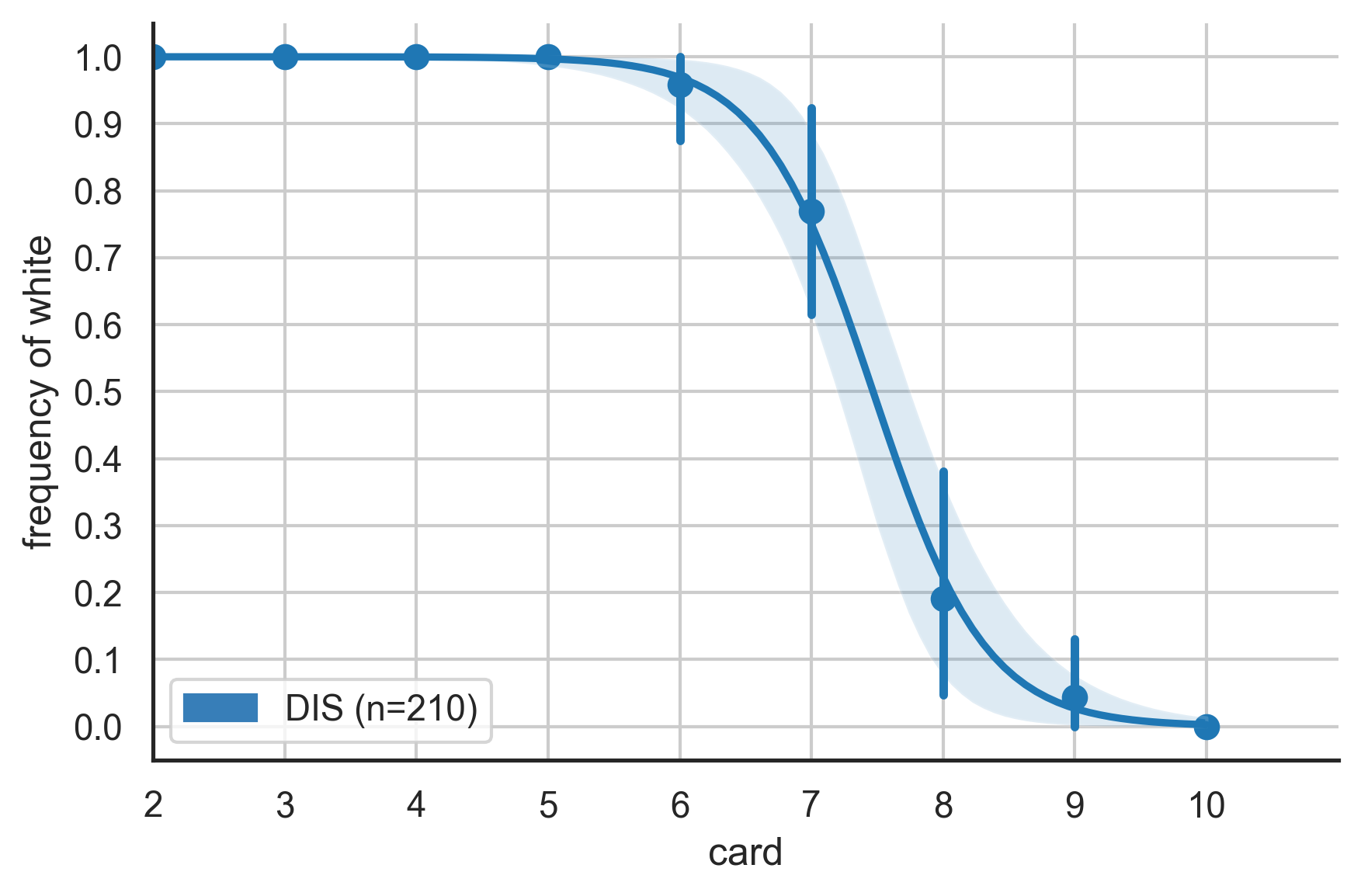}
  \caption{Logistic regression of the relationship between card number and choice of a white stone in the qualitative experiment at DIS. The resulting curve shows a very similar overall trend as in the main experiments shown in Fig.~\ref{fig:1}.} 
  \label{fig:dis} 
  \end{figure}  
Fig.~\ref{fig:dis} shows the results for the alternative version of the game conducted at DIS, where players are dealt cards with numbers from 2 to 9, and choose either a black or a white stones (cf.\ Section~\ref{sect:experimental_design}). Choosing a white stone corresponds to choosing office in the original version of the game, and getting a card with the number $n$ corresponds to arriving at $9{:}00+00{:}10(n-9)$ in the original game. So e.g.\ receiving card 7 correponds to arriving at $8{:}40$, receiving card 8 corresponds to arriving at $8{:}50$, and receiving card 9 corresponds to arriving at $9{:}00$. The results in Fig.~\ref{fig:dis} are very similar to the results of the original version of the game, Fig.~\ref{fig:1}. The curve has the same shape, and with a slope that is most similar to the DTU experiments (which makes sense, as here the participants are also students in computer science and AI). The choices at 6 ($8{:}30$) are more than $90\%$ white (canteen), and the choices at 9 ($9{:}00$) are more than $90\%$ black (office), similar to the DTU1 experiments. The choices at 8 ($8{:}50$) are around $20\%$ white (canteen), which is lower than for any of the original experiments, suggesting a potentially better understanding of the risk of miscoordination. Alternatively, it might be due to the fact that in this version of the game, miscoordination leads to immediately losing all earnings, so the penalty of miscoordination is even clearer and higher in this version of the game (this was a deliberate choice in our game design, in order to make sure that the observed results of the original game were not due to unawareness of the high penalty of miscoordination).

The DIS classroom experiment was followed by a number of qualitative interviews with the students, divided into groups of 4.  These confirmed that the players considered the rules to be common knowledge. 
When asked about the existence of safe strategies guaranteeing coordination (always choosing the same stone), 
most players expressed the false belief that a cut-off strategy existed that would guarantee coordination on white stones for sufficiently low numbers (which we proved false in footnote~\ref{footnote:no-cutoff}). Most players believed that the cut-off would be around 6 or 7,  
although some realized during the interview that such a  strategy can never guarantee coordination. Only one student was skeptical about the existence of a safe strategy for coordinating on white from the very beginning of the interview.

\section{Discussion}\label{discussion}
Let us try to analyse the experimental results in terms of the depth of knowledge of the participants. The highest payoff is achieved when successfully coordinating into the canteen before 9:00. With the aim of achieving the highest possible payoff, each participant can be expected to consider her own arrival time and try to assess whether there is still time to meet in the canteen. When a participant arrives strictly before 9:00, i.e.\ at $8{:}50$ or earlier, she has private knowledge that she arrives sufficiently early to go to the canteen. If participants only make choices based on their private knowledge, we should then expect participants to always go to the canteen at $8{:}50$. This is not what we see, cf.\ Fig.~\ref{fig:1}. Thus, other considerations in addition to the player's private knowledge must play a role in their decision making.

When both participants know they arrived before 9:00, they have shared knowledge of having arrived in time for going to the canteen. This happens for any arrival pair $(t_1,t_2)$ with $t_i \leq 8{:}50$, $i=1,2$ (where an \emph{arrival pair} $(t_1,t_2)$ denotes that player $1$ arrives at time $t_1$ and player $2$ at time $t_2$, see Appendix~\ref{appendix:formal} for further details). Note that for an arrival pair $(8{:}50,8{:}40)$, there is shared knowledge of there being sufficient time to go to the canteen, but only player $2$ knows this fact: Player $2$ knows that also player $1$ must have arrived before 9{:}00, but player $1$ doesn't know this about player $2$. In other words, when a player arrives at $8{:}50$, that player considers it possible that there is shared knowledge of there being sufficient time for a cup of coffee in the canteen, but only if arriving at $8{:}40$ or before will that player \emph{know} there to be shared knowledge (to depth 1). When arriving at $8{:}30$ or before, the player additionally knows there to be shared knowledge to depth 2. We illustrate this in Fig.~\ref{fig:indistinguish}. Note that in general, if a player arrives at time $8{:}50-0{:}10n$, $n>0$, then that player knows that there is $n$th-order shared knowledge, but the player doesn't know there to be $(n+1)$st-order shared knowledge. This follows a similar pattern as the mountain trekking example, except here the depth of shared knowledge is determined by how early ahead of 9:00 the agents arrive, rather than how many messages have successfully been delivered. No number of messages was sufficient to achieve common knowledge in the mountain trekking example. We similarly get that no arrival time is sufficiently early to establish common knowledge of having sufficient time to meet for coffeee in the canteen. 

\begin{figure} 
\[
  \begin{tikzpicture}[align=left,xscale=0.9]
   \node[anchor=west] at (-1.0,-2.5) {players' \\ knowledge level}; 
   \node[anchor=west,red] at (-1.0,0.9) {arrival time \\ player 1:};
   \node[anchor=west,mygreen] at (-1.0,-1) {arrival time \\ player 2:};
   \node[red] (n1820) at (3,1) {8{:}20};
   \node[mygreen] (n2820) at (3,-1) {8{:}20};
    \node[red] (n1830) at (5,1) {8{:}30};
   \node[mygreen] (n2830) at (5,-1) {8{:}30};
   \draw[-] (n1820) to (n2830);
   \draw[-] (n2820) to (n1830);
  \draw[dashed] (4,-3) to (4,1.5);
    \node[red] (n1840) at (7,1) {8{:}40};
   \node[mygreen] (n2840) at (7,-1) {8{:}40};
    \node[red] (n1850) at (9,1) {8{:}50};
   \node[mygreen] (n2850) at (9,-1) {8{:}50};
     \draw[-] (n2840) to (n1850);
   \draw[-] (n1840) to (n2850);
   \draw[-] (n1830) to (n2840);
   \draw[-] (n2830) to (n1840);
  \draw[dashed] (6,1.5) to (6,-3);
  \draw[dashed] (8,1.5) to (8,-3);
     \node[red] (n1900) at (11,1) {9{:}00};
   \node[mygreen] (n2900) at (11,-1) {9{:}00};
    \node[red] (n1910) at (13,1) {9{:}10};
   \node[mygreen] (n2910) at (13,-1) {9{:}10};
     \draw[-] (n2850) to (n1900);
   \draw[-] (n1850) to (n2900);
   \draw[-] (n1900) to (n2910);
   \draw[-] (n2900) to (n1910);
  \draw[dashed] (12,1.5) to (12,-3);
  \draw[dashed] (10,1.5) to (10,-3);
  \node at (13,-2.5) {none};
  \node at (11,-2.5) {none};
  \node at (9,-2.5) {private};
  \node at (7,-2.5) {shared\\ depth 1};
  \node at (5,-2.5) {shared\\ depth 2};
 \node at (3,-2.5) {shared\\ depth 3};
 \node at (7.8,0)
 	{\includegraphics[scale=.75]{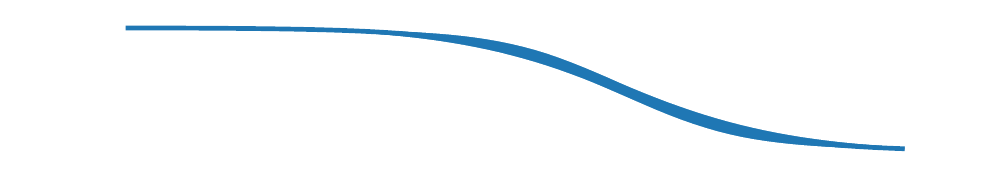}};
  \end{tikzpicture}
  \]
  \caption{The solid diagonal lines express indistinguishability for the players, e.g.\ the arrival time $8{:}40$ for player $1$ has a line to both of the arrival times $8{:}30$ and $8{:}50$ for player $2$, since these are the two arrival times for player $2$ that player $1$ will consider possible when herself arriving at $8{:}40$. Below each possible arrival time, we have marked the highest level of knowledge of there being sufficient time to go to the canteen, e.g.\ when arriving at $8{:}40$, there is shared knowledge to depth 1 of this fact, but not shared knowledge to depth $2$. In blue, a binary logistic regression model was used to predict the probability of a participant going to the canteen (upper limit) or to the office (lower limit) at the shown arrival times. The width of the regression line indicates the 95\% confidence interval using 10.000 bootstrapped resamples of all choices in all three experiments ($\text{n}=7432$).
  }\label{fig:indistinguish}
\end{figure}
The participants seem to clearly be able to distinguish between private and shared knowledge, which is supported by their significantly different choices at $8{:}50$ and $8{:}40$ (see again Fig.~\ref{fig:indistinguish}). However, it is less clear whether they are able to robustly distinguish different levels of shared knowledge, and whether they are able to distinguish that from common knowledge. Indeed, most participants relatively robustly choose the canteen at $8{:}40$ and any time before that, despite the difference in depth of shared knowledge in those possible arrival times. The certainty estimates are however slowly decreasing from $8{:}10$ to $8{:}50$ in all three experiments (see Fig.~\ref{fig:certain}), showing that the participants are not completely ignorant to the differences. This could suggest that many participants believe that it is less safe to go to the canteen based on $n$th-order shared knowledge than $(n+1)$st-order shared knowledge. Very few seem to draw the conclusion that it is never safe to go to the canteen, or, if they do, they at least don't expect the other player to be able to draw the same conclusion. 

Why do participants not regard earlier office choices as viable options? Why do participants not continue their train of thought and deduce that when $8{:}50$ turns out to be unsafe, $8{:}40$ will become unsafe as well, which means that $8{:}30$ will also be unsafe, etc.? One reason may be that the all-office strategy is cognitively difficult to comprehend given the  limited  ability of humans to take the perspective of each other recursively. 
One might also speculate that the players suffer from a 
computational overload from the complexity of the game's rules and payoff structure. While this may be a problem, we do not think it has much significance for the following reasons. If computational overload was sufficient to distract our participants, we might expect more varied suboptimal choices. While we do see some bluntly irrational choices of canteen at 9:00 (given the rules and payoff structure), the vast majority choose office at 9:00 and 9:10, and canteen at earlier arrival times, with many still choosing office at 8:50. Besides this, we see participants be much less certain of their canteen choices at 8:50 or 8:40 than at earlier arrival times. 

Another reason for not considering earlier office choices as viable may be that while the optimality of an all-office strategy are understood, the participants do not believe that this is common knowledge, and hence they do not rely on the other participants to choose the same strategy. That is, they either think their co-player does not realize there is no safe cut-off, or that their co-player thinks this about them, and so on. In this way, the strategic reasoning involved becomes akin to the Keynesian beauty contest~\cite{keynes1936general,nagel1995unraveling}. Under this assumption, we should not necessarily expect participants to play an all-office strategy, since that is only proven optimal when assuming there to be common knowledge of perfect rationality (see Appendix~\ref{appendix:formal}). A possible alternative strategy may be the following mixed strategy: 1) always go to the canteen before $8{:}50$, 2) always go to the office after $8{:}50$, and 3) go to the canteen with probability $0.5$ at $8{:}50$. If both players follow this strategy, the arrival time combinations $8{:}40/8{:}50$ and $8{:}50/9{:}00$ will then coordinate $50\%$ of the time, matching well with what we observe in Fig.~\ref{fig:miscoordinations}. If this is the strategy followed, and if players assume their co-players to follow the same strategy, they should also be aware of the $50\%$ probability of miscoordination at $8{:}50$. In Fig.~\ref{fig:certain}, we indeed see a much lower coordination certainty estimate at 8:50. However, if players were indeed playing this strategy and assuming their co-players to do the same, their certainty estimates ought to be even much lower at 8{:}50. 
 These certainty estimates also ought to be lower if we assume that players are aware of the possibility of their co-players not being perfectly rational and their chosen strategy hence to be somewhat unpredictable.

\begin{figure} 
\centering\includegraphics[width=1\linewidth]{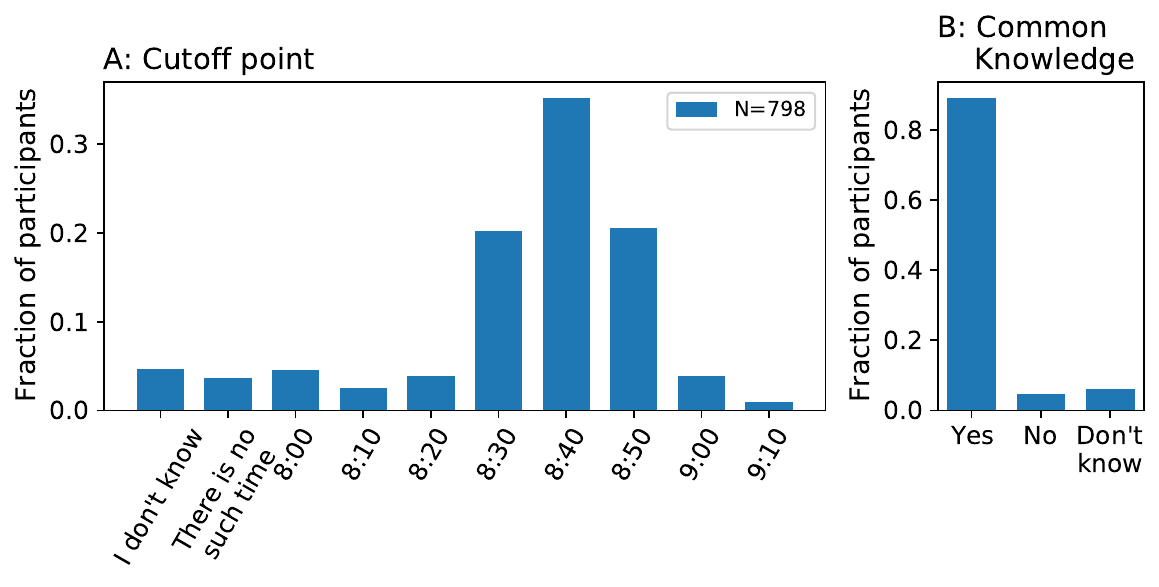}
\caption{A) Frequencies of answers to the question: ``Imagine you could have agreed beforehand with your colleague about a point in time where it is safe to go to the canteen. What time would that be?'' Due to the pragmatics of language, we assume that an answer like 8:30 entails the belief that all earlier arrival times would also be deemed safe. B) Frequencies of answers to the question: ``Imagine you arrive at 8:10. Is it common knowledge between you and your colleague that it is safe to go to the canteen, that is, that you both arrived before 9:00?''}
\label{fig:cutoff answers} 
\end{figure}
Probing theses questions further, we asked participants the following post-game question:
\begin{quote}
\indent
``Imagine you could have agreed beforehand with your colleague about a point in time where it is safe to go to the canteen. What time would that be?'' \textit{('I don't know', 'There is no such time', 8:00, 8:10, 8:20, 8:30, 8:40, 8:50, 9:00, 9:10)}
\end{quote}
The results in Fig.~\ref{fig:cutoff answers}A show that most answers range from 8:30 to 8:50 (approximately $75\%$ of all answers), giving support to the verdict that participants are not able to continue taking the perspective of each other recursively, or at least that they believe that shared knowledge of some modest finite depth is sufficient for the canteen choice to be safe. Rather, they stop after one or two, possibly three, iterations, thus believing that as long as they arrive sufficiently early, they can be sure to coordinate safely in the canteen. Notice that the correct answer, ``there is no such time'', is chosen by less than $4\%$ of all participants. The interviews with students in the DIS experiment indicates the same. One student claims for example that ``(from) two to five, there is no risk involved''. This also reflected in the results, as Fig.~\ref{fig:dis} shows, no students plays black (equivalent to an `office' choice) for numbers 5 and below. One interview from the DIS experiment also ends with a student asking ``Did you find an optimal number?''. 

The answers to the post-game question above, as well as the DIS interviews, indicates that the prevalence of canteen choices is not due to uncertainty about the rationality of other players, but instead due to a more immediately available belief, that there is a point where canteen choices (or white stones in the DIS experiment) can be played without the risk of miscoordination. In reality, they would need common knowledge in order for the canteen choice to be without risk, and the players seem to behave as if they have such common knowledge. 
Participants might of course not necessarily have a precise idea of the technical notion of common knowledge and how it differs from $n$th-order shared knowledge, but as discussed in the introduction, there is actually quite a number of studies demonstrating that humans have adapted to recognize common knowledge and making distinct strategic choices depending on whether there is common, shared or private knowledge---at least in cases where the difference between these states of knowledge is relatively clear (for instance when the common knowledge situation is achieved via a public announcement). In our experiments, we see the player behavior stabilizing already at relatively modest depths of shared knowledge, both in terms of action choices and certainty estimates. We also see that player behavior does not match what we would expect to see if they only believed to have $n$th-order shared knowledge, but instead matches what we would expect to see if they indeed wrongly inferred common knowledge. 

To specifically address the issue of whether they wrongly infer common knowledge, we asked a final post-game question:

\begin{quote}
\indent
``Imagine you arrive at $8{:}10$ am. Is it common knowledge between you and your colleague that it is safe to go to the canteen, that is, you both arrived before $9{:}00$ am?''. \textit{(`Yes', `No', `Don't know')}
\end{quote}

\noindent
This question inquires about participants' understanding of the term `common knowledge', and how it applies to the given situation. In Fig.~\ref{fig:cutoff answers}B, the results show that $89\%$ of all participants responded that it was common knowledge that both players arrived before $9{:}00$ when they themselves had arrived at $8{:}10$. The answers may signify that indeed they believe there to be common knowledge in the strict technical (logical) sense. But of course the answers could also pertain to the everyday linguistic usage of the term `common knowledge', which is less strict.

\section{Conclusion and future work} \label{S:conclusion}
We developed a new coordination game, the \emph{Canteen Dilemma}, to examine higher-order social reasoning in humans. Our experimental results 
indicate a substantial degree of miscoordination, even in scenarios where coordination \emph{is} achievable (albeit only for the second-highest payoff). This miscoordination appears to stem from what we term the ``curse of shared knowledge'': an illusion of common knowledge in situations where only shared knowledge exists, and only to a limited depth. While the game results alone do not conclusively demonstrate that this illusion of common knowledge occurs, they show that players behave as though they have common knowledge. Moreover, variations to the game and responses to post-game questions strongly suggest the presence of this illusion. For instance, an overwhelming majority of participants report having common knowledge about the safety of arriving at the canteen at $8{:}10$ (Fig.~\ref{fig:cutoff answers}B). Additionally, qualitative interviews following the DIS classroom experiment further supported this interpretation. 

While we cannot claim to have definitely proven the existence of a 'curse of shared knowledge', our experiments support it as a plausible conjecture that warrents further investigation. This result is somewhat surprising, as it initially appears to contrast with existing literature on shared and common knowledge. 
Previous findings~\cite{lee2010rationales, thomas2014psychology, thomas2016recursive, thomas2018common, de2019common, de2019maimonides} suggest that people make strategically different choices depending on whether they have shared or common knowledge. 

The key difference between our study and these previous works lies in the experimental designs: Earlier studies explicitly inform participants about their level of knowledge---whether it is shared or common---before asking them to make decisions. In contrast, our study requires participants to deduce their level of knowledge independently, without explicit clarification. 
Thus, while prior research investigates 
decision making when knowledge levels are explicitly stated, our work examines whether individuals can distinguish between shared and common knowledge when this distinction is not made explicit. 

Our findings suggest they cannot; at least not in our game. It is possible that 
specific aspects of our 
experimental design make it particularly challenging to distinguish between shared and common knowledge. 
However, this raises a broader question: If people make different strategic choices when being explicitly informed about 
their knowledge level in controlled experiments, 
does this have implications for real-world decision making 
where such explicit clarification is unlikely or at least probably relatively rare?

For individuals to make rational decisions based on 
whether they are in a shared or common knowledge situation, 
two conditions have to be met: 1) the ability to accurately identify 
their knowledge situation 
and 2) the ability to deduce the strategic implications of these 
situations. 
While prior research suggests that 
people 
can reliable accomplish the 
second task, our findings imply they 
may struggle with the first. 
Future work 
should explore human decision making in more ecologically valid, real-life scenarios where shared and common knowledge are relevant, 
but 
participants are not explicitly informed about their knowledge levels.



If our experiments indeed 
reveal a curse of shared knowledge, as we conjecture, this is a somewhat surprising result. In most cases, agents' bounded rationality and lack of logical omniscience lead them to 
draw fewer conclusions 
than they 
otherwise 
might~\cite{fagin1995reasoning, halpern2011dealing}.
However, in our case, 
these limitations appear to have the opposite effect: They cause individuals to
conflate shared knowledge with common knowledge, 
leading them 
to conclude more than is warranted (e.g. believing 
common knowledge exists when 
it does not). 

From studies on coordination, particularly the coordinated attack problem~\cite{fagin1995reasoning}, we know that successful coordination generally requires common knowledge~\cite{moses2016relating}, 
which is often unattainable.
Common knowledge is relatively rare, 
typically arising only 
in situations of joint 
attention, such as during public announcements. 
Despite this, humans frequently attempt to coordinate actions and 
decisions based solely on
shared knowledge, 
accepting the inherent risk of miscoordination. In most real-life situations, the penalties for miscoordination 
are likely lower than in the Canteen Dilemma, 
making the risk more acceptable (though also in the coordinated attack problem, the penalty for miscoordination is severe). 
What the Canteen Dilemma highlights is that, in many cases, we may not even recognize the risk of miscoordination.
This suggests that humans might not consciously choose to 
rely on lower-order shared knowlege for coordination, but 
instead often
assume that coordination 
is
guaranteed. 
This assumption might stem from an inability
to distinguish lower-order shared knowledge from common knowledge. 
To our knowledge, this point has not been 
previously made. 

The `curse of shared knowledge' has potential implications for multiple disciplines that study human reasoning and decision making, including cognitive science (e.g.\ work on Theory of Mind), economics and game theory (e.g.\ behavioral economics and decision-making under uncertainty and information asymmetry), and artificial intelligence (e.g.\ in multiagent systems and human-robot interaction).

\clearpage

\appendix

\section*{\centering Appendix} 



\section{Experimental design and data collection} A large total sample size of $N=870$ was chosen to get robust conclusions from the statistical analysis while giving room for high variability in behavior. In fact, we thought that there would be a minority of pairs
opting for the all-office strategy. This turned out not to be the case. Experiments on Amazon Mechanical Turk had a total of 714 participants (including dropouts, see Appendix \ref{app:mturk-walkthrough}), while the two classroom experiments at the Technical University of Denmark (DTU1 and DTU2) had a total of 106 and 50 participants, respectively\footnote{Complete anonymized data files and all code can be downloaded from \url{https://github.com/gavstrik/Paper_canteen_dilemma}.}

The average payout to MTurk workers was \$4.17 (including a general participation fee of \$2). After accepting our task and providing informed consent, participants from MTurk were put in a ’waiting room’ until they were paired up with another participant. After an instructions page, detailing the rules of the game, participants were given an arrival time $t \in \{8{:}00, 8{:}10, 8{:}20, 8{:}30, 8{:}40, 8{:}50, 9{:}00,9{:}10\}$ and asked to make a decision between between going to the canteen or to the office. Next, participants were asked to estimate how certain they were that their `colleague' made the same choice as them, ranging from `very uncertain' over `slightly certain', `somewhat certain' and `quite certain' to `very certain', which were translated  into numerical values, $e_i$, used in the payoff calculations (see below). A results page was shown between each round, showing the results of the previous rounds, including arrival times for both players, their choices, their own certainty estimates and resulting payoffs. After 30 seconds, the game would automatically proceed to the next round. After the last round, we asked all participants a few final questions about their strategy and their understanding of the game. The experiments were implemented using oTree 2.1.35~\cite{ChenSchongerWickens16}.

The two classroom experiments, DTU1 and DTU2, differed from the MTurk experiment in a few aspects: 1) the maximum number of rounds played was increased from 10 to 30; 2) the initial bonus given each participant was increased accordingly from \$10 to \$30; 3) three additional questions were asked in order to elicit more explicitly some of the implicit assumptions and explicit behaviours by the students; 4) participants were told that they would not receive any monetary rewards, but that they should try to do their best. DTU1 received prizes. Screenshots, additional questions, experimental settings, and a detailed walk-through can be found in Appendix \ref{app:materials-methods-further}.

\subsection{Mixed effect logistic regression} \label{appendix:logistic} 
The experimental results were also analyzed using a generalized mixed-effects logistic regression model in R (\emph{glmer}), with the arrival time $t$ and experimental condition $exp$ (MTurk, DTU1, DTU2) as fixed effects including interactions and random effects for each pair of players, because we expect pairs to behave differently, depending on which levels of shared knowledge they individually take into account in their reasoning and collectively try to agree upon. Exploratory analysis including the round number as an additional fixed effect did not show any significant contributions, meaning that players did not significantly change their strategy during the game. 

\begin{figure} 
	\centering\includegraphics[width=0.8\linewidth]{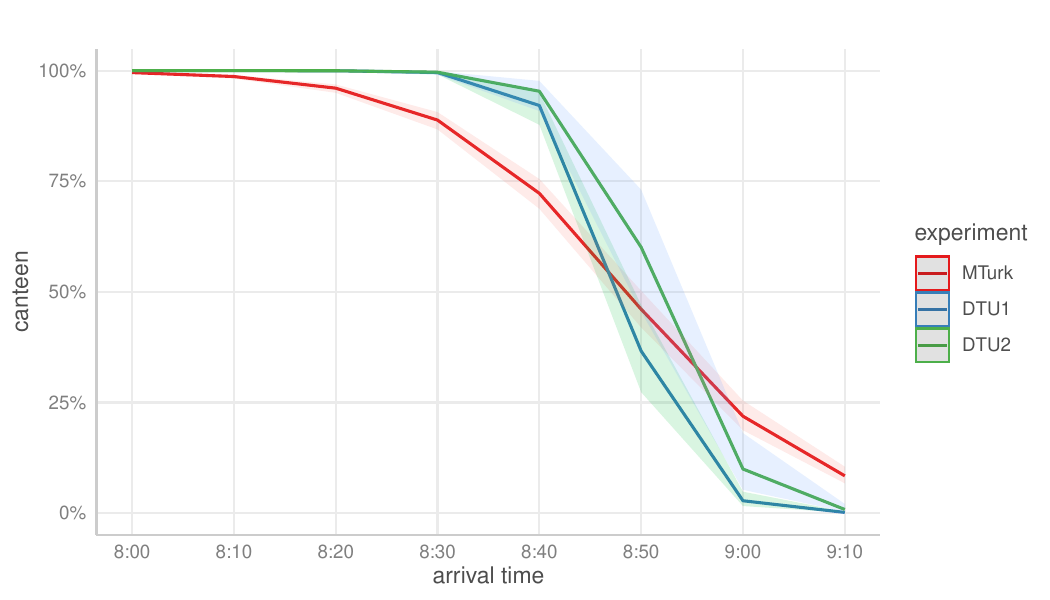}
	\caption{Percentage of canteen choices as a function of arrival times. The colored lines are logistic regression lines from a mixed-effects model with 95\% confidence intervals shown as translucent bands. Fitted parameters show significant differences in the slope and intercept between MTurk and DTU experiments ($p < .0001$).}
	\label{fig:mx}
\end{figure}

Using a generalized linear mixed-effects model with the arrival time and the three experimental conditions as fixed effects and with player-pairs as random effects, we obtain a logistic regression fit in Fig.~\ref{fig:mx}, showing the probability of players choosing the canteen as a function of their arrival time.Results from the mixed-effects  model are almost indistinguishable from the simple logit model without random effects used in Fig.~\ref{fig:1} The steepest slope occurs at $p(t) = 1/2 = -\alpha/\beta$, which for the MTurk and DTU1 experiments is $t=8{:}48$ and for DTU2 is $t=8{:}52$. The regression line in Fig.~\ref{fig:indistinguish} is obtained by combining observations from all three experiments and using a simple logistic regression model without random effects. The high number of observations imply small confidence bands. Hence, conclusions from the models can be viewed as robust. Code and data are available on github (\url{https://github.com/gavstrik/Paper_canteen_dilemma}).

\section{Payoffs and Penalties} \label{appendix:payoffs}
 All MTurk players finishing the game were paid a participation fee of \$2. In addition, a bonus could be earned if players did well. Before the game started, the bonus was set to \$10 for all participants. After each round, the bonus was reduced by a personal penalty, depending on the two players' choices. Penalties are calculated using a logarithmic scoring rule and by ordering them to be minimized by successful coordinations into the canteen. Penalties are maximized by any type of miscoordination or forbidden choice (i.e.\ going to the canteen at 9:00 or later). Office coordinations are designed to have larger penalties than canteen coordinations, but smaller penalties than miscoordinations in order to make sure that coordination remains the main objective of the game. Penalties are defined as negative utility values in the following way. First, we define the chosen action $a_i$ by player $i$, $i=1,2$, to take binary values encoding the canteen option ($a_i=0$) and the office option ($a_i=1$), and define their respective certainty estimates $e_i \in \{0.5, 0.625, 0.75, 0.875, 0.99\}$. We can then express the utility $u$ received by player $1$ as $u(e_1, a_1, a_2) = (1-|a_1-a_2| + a_1a_2)ln(e_1) + 2|a_1-a_2|ln(1-e_1)$, and symmetric for player 2. If any of the players choose the canteen at 9:00 or after, the utility becomes $u(e_1,a_1,a_2) = 2ln(1-e_1)$ for player 1 (and symmetric for player 2), corresponding to a miscoordination. As an example, imagine player 1 arrives at $8{:}40$ and choses the canteen, $a_1=0$. She estimates the probability that her colleague also will to go to the canteen to ``somewhat certain'', $e_1=0.75$. If her colleague indeed chooses the canteen, $a_2=0$, her utility will be $u(e_1,a_1,a_2) = ln(e_1) = -0.29$, but if her prediction proves false and her colleague chooses the office instead, her utility will be $u(e_1,a_1,a_2) = 2ln(1-e_1)=-2.77$. If she goes to the office just like her colleague, her utility is $u(e_1,a_1,a_2) = 2ln(e_1)=-0.58$. It should be noted that the logarithmic scoring rule used here is strictly proper. It is therefore not surprising that we find a good match between estimates and actual choices at arrival times different from those that are prone to miscoordinations, as seen in Fig.~\ref{fig:certain}, indicating that loss minimization remained a central concern and that participants made their choices and estimates as honestly as possible \cite{good1992rational, seidenfeld1985calibration, palfrey2009eliciting}.

\section{Formal Analysis} \label{appendix:formal}
We now give a detailed formal analysis of the game. We want to keep things as general as possible, so we don't assume a particular payoff structure, but any payoff structure satisfying certain constraints, and we want to consider arbitrary strategies, also mixed ones. This makes the analysis a bit non-trivial, although the underlying intuitions are rather straightforward. We of course haven't expected our human subjects to have made a deep analysis corresponding to the one below before starting to play the game. That should also not be necessary: The crucial aspect is to realize that there is no time point at which it is safe to go to the canteen (and hence avoid the high penalty), and this only requires to reason that 8{:}50 is not safe, and that then also 8{:}40 is not safe, and that then also... etc. What we do below is just to verify in a generalised setting that this reasoning is indeed formally correct and game theoretically valid and acting according to it leads to the highest expected payoff.  

The game can be represented as a game with three players, \emph{nature}, \emph{player 1} and \emph{player 2}. Nature is the player that initially decides the arrival times of player 1 and 2. Then player 1 and 2 are each informed of their own arrival time, and each have to choose among two actions: $o$ for going to the office and $c$ for going to the canteen. Based on the choice of actions by all three agents, player 1 and 2 receive a payoff. The payoff for player 1 is always equal to the payoff for player 2 (common payoff game). We  disregard the certainty estimates for now. The action choice of nature can be represented as an \emph{arrival pair} ${\bf t} = (t_1,t_2)$ consisting of the arrival time $t_1$ for player 1 and $t_2$ for player 2. Any arrival pair $\bf t$ has to satisfy that  $| t_1 - t_2 | = 0{:}10$. In our specific version of the game, we additionally have the restriction that $8{:}10 \leq t_i \leq 9{:}10$ for $i = 1,2$. The analysis of optimal strategies however doesn't depend on the exact arrival times available, so we will make things a bit more general and only assume that there is an earliest arrival time $\tmin$ and a latest arrival time $\tmax$, and that $\tmin \leq 8{:}50$ and $\tmax \geq 9{:}00$. Since we are counting in tens of minutes, by slight abuse of language we will call an arrival time \emph{even} if it is of the form $9{:}00+0{:}20x$ for some integer $x$, and \emph{odd} if it is of the form $9{:}00+0{:}10+0{:}20x$. The game then naturally splits into two disjoint subgames: in one of these subgames, player 1 always arrives at an odd time, and player 2 at an even time; 
 in the other subgame, the arrival time of the two players are reversed. The two subgames are disjoint in the sense that it is always common knowledge among the players which of the two arrived at an even time. We can hence restrict focus to one of the two subgames. We will pick the subgame where player 1 arrives at an odd time. To simplify further, we can without loss of generality assume that both $\tmin$ and $\tmax$ are also odd times. All reasoning performed in the following generalises immediately to the other subgame and to other values of $\tmin$ and $\tmax$. 
 
We let $T$ denote the set of arrival times and $\bf T$ the set of arrival pairs.  
The game starts by nature choosing an element $\mathbf{t} \in \bf T$. Nature is not a strategic player, so we assume that $\bf t$ is chosen uniformly at random, which is exactly how $\bf t$ is chosen in our experiments. The participants do not know that the arrival times are chosen uniformly at random, as this is left implicit in the description of the game. The following analysis of optimal strategies in the game could potentially change if arrival times were chosen according to a highly skewed probability distribution.

When nature has chosen its action ${\bf t} \in \bf T$ and player 1 and 2 have chosen their actions $a_1$ and $a_2$, player 1 and 2 receive their payoff, which we denote $u_\mathbf{t}(a_1,a_2)$ (the utility resulting from player 1 choosing $a_1$ and player 2 choosing $a_2$ given that nature played $\bf t$). We don't need to make any assumptions regarding the exact utility values (payoff values), except that successful coordination into the canteen before $9{:}00$, denoted $u_\mathit{c}$, is always better than successful coordination into the offices, denoted $u_\mathit{o}$, which again is always better than being miscoordinated, denoted $u_\mathit{m}$. Hence we suppose given fixed utility values $u_\mathit{c} > u_\mathit{o} > u_\mathit{m}$ such that for all ${\bf t} = (t_1,t_2) \in {\bf T}$,  $u_{\bf t}(o,o) = u_\mathit{o}$, $u_{\bf t}(o,c) = u_{\bf t}(c,o) = u_\mathit{m}$ and 
\[
u_{\bf t}(c,c) = 
\begin{cases}
   u_\mathit{c} &\text{if $t_1,t_2 < 9{:}00$} \\
   u_\mathit{m} &\text{otherwise}
\end{cases}
\]

A \emph{strategy} 
 is a mapping from arrival times to actions, that is, 
   a mapping $s: T \to \{c,o\}$. 
 Normally one would define two strategies, one for each player. That is however not necessary in our game. By assumption, player 1 can only observe odd arrival times, so if $t$ is odd, $s(t)$ denotes the strategic choice of player 1 at $t$, otherwise of player 2 at $t$. In this way $s$ encodes a \emph{strategy profile} (a strategy for each player). Since each player only observe their own arrival time, their strategic choice can only depend on their observed arrival time, which is why $s$ is a mapping from $T$ rather than from $\bf T$.   

Note that the defined strategies are memoryless (Markov strategies), that is, a player's choice only depends on the observed arrival time in the current round, not the history of arrival times and chosen actions in earlier rounds. Human players playing the game should not be expected to necessarily play memoryless strategies, as they might seek to adapt to the observed strategy of the other player. However, since it is a repeated game (every round is a new instance of the same game), perfectly rational players playing the game for a sufficient number of rounds should converge to an optimal memoryless strategy. We will leave further discussion of history-dependent strategies and focus on the optimal memoryless strategies in the following.


The payoff of a strategy $s$ at an arrival pair $(t_1,t_2)$ is then defined as $u_{(t_1,t_2)}(s) = u_{(t_1,t_2)} (s(t_1),s(t_2))$.  
The \emph{expected utility} $EU(s)$ of a strategy $s$ is the average of the payoffs~\cite{shoham2008multiagent}:
\[
 EU(s) = \frac{1}{|  {\bf T} |} \sum_{{\bf t} \in \bf T} u_\mathbf{t}(s). 
\]
Note again that player 1 and 2 get the same payoff (common-payoff game), so there is only one expected utility value to be computed. A strategy $s'$ \emph{Pareto dominates} another strategy $s$ if $EU(s') > EU(s)$ \cite{shoham2008multiagent}.
A strategy is \emph{Pareto optimal} if there does not exist another strategy dominating it. 
The game is cooperative (between player 1 and 2), so both players should seek to play a Pareto optimal strategy. Also, since it is a common-payoff game, all Pareto optimal strategies have the same expected utility~\cite{shoham2008multiagent}. 

Note also that our strategies are pure. To make things as general as possible, one would normally also consider mixed strategies, i.e., probability distributions over strategies (or, equivalently, mappings of arrival times into probability distributions over actions).  However, that is not necessary here as our game is a common payoff game. As explained in~\cite{leyton2022essentials}, common payoff games always have at least one pure Pareto optimal strategy. Furthermore, any Pareto optimal mixed strategy will be a combination of pure Pareto optimal strategies. 
To see the latter, suppose, for $i=1,2$, that player $i$ has a mixed strategy $m_i \in \Pi(S_i)$, where $S_i$ is the set of possible pure strategies of player $i$, and $\Pi(X)$ denotes the set of probability distributions over $X$~\cite{shoham2008multiagent} (in our particular game, $S_1$ would assign choices to only the odd arrival times, and $S_2$ to the even). Suppose further that the strategy profile $(m_1,m_2)$ is Pareto optimal. Let $S'_i \subseteq S_i$ denote the subset of strategies that are assigned positive probability in $m_i$. 
Each pure strategy profile $(s_1,s_2) \in S'_1 \times S'_2$ has an expected utility. Consider a pure strategy profile $(s_1,s_2) \in S'_1 \times S'_2$ that has maximal expected utility among all policy profiles in $S'_1 \times S'_2$. Suppose there exists another strategy profile $(s'_1,s'_2) \in S'_1 \times S'_2$ with strictly lower expected utility. If that's the case, then the pure strategy profile $(s_1,s_2)$ has higher expected utility than the mixed strategy profile $(m_1,m_2)$, which contradicts the assumption. Hence all strategy profiles $(s'_1,s'_2) \in S'_1 \times S'_2$ must have the same expected utility, hence all be pure Pareto optimal strategies. This means that we can focus on finding pure Pareto optimal strategies, since any potential Pareto optimal mixed strategies will be a combination of such Pareto optimal pure strategies.

We will now try to determine the possible candidates for being Pareto optimal strategies. 
\begin{lemma}\label{lemma:no-canteen-at-nine}
If a strategy $s$ is Pareto optimal then $s(t) =o$ for all $t \geq 9{:}00$.
\end{lemma}
\begin{proof}
We prove the contrapositive. Consider a strategy $s$ with $s(t) = c$ for some $t \geq 9{:}00$. We can assume $t$ to be odd, the other case being proved similarly. This implies $t \geq 9{:}10$.
Now define a strategy $s'$ which is identical to $s$ except $s'(t') = o$ for all $t' \geq 9{:}00$ (meaning that for arrival times at $9{:}00$ or after, both agents go to the office).  We want to show that $s'$ Pareto dominates $s$, implying that $s$ is not Pareto optimal. First note that
 $u_{(t,t+0{:}10)}(s') = u_o$
 by definition of $s'$.
For this arrival pair, it is impossible to receive the utility $u_c$, so $u_o$ is the highest possible payoff for this arrival pair. 
Hence any strategy not specifying $(o,o)$ at this arrival pair 
will have a strictly lower (expected) payoff. Now note that since $s(t) = c$, then $s$ does \emph{not} specify $(o,o)$ at $(t,t+0{:}10)$, and hence we get 
$u_{(t,t+0{:}10)}(s') > u_{(t,t+0{:}10)}(s)$. 
This proves the existence of an arrival pair for which $s'$ has a strictly higher utility than $s$. To prove $EU(s') > EU(s)$, we hence only need to prove that $u_\mathbf{t'}(s') \geq u_\mathbf{t'}(s)$ for all $\mathbf{t'} = (t'_1,t'_2) \in \bf T$. When $t'_1,t'_2 < 9{:}00$ this is trivial, as we then have $s'(t'_i) = s(t'_i)$ by definition of $s'$. When $t'_j \geq 9{:}00$ for some $j$, $s'(t'_j) = o$ by definition.
If $u_{\bf t'}(s) = u_m$, we immediately get $u_{\bf t'}(s') \geq u_{\bf t'}(s)$, since $u_m$ is the lowest payoff. If $u_{\bf t'}(s) = u_o$ then also $u_{\bf t'}(s') = u_o$, since $s'$ has an $o$ in all the places where $s$ has. 
\end{proof}

\begin{lemma}\label{lemma:no-c-after-o}
Let $s$ be a Pareto optimal strategy. For all arrival times $t$, if $s(t) = o$ then $s(t+0{:}10) = o$.
\end{lemma}
\begin{proof}
We prove the contrapositive.
Suppose $s$ is a strategy with $s(t) = o$ and $s(t+0{:}10) = c$ for some $t$. We need to prove that then $s$ is not Pareto optimal, i.e., we need to find a strategy $s'$ that  Pareto dominates $s$. 
If $s(t') = c$ for some $t' \geq 9{:}00$, the existence of a strategy dominating $s$ follows
immediately from Lemma~\ref{lemma:no-canteen-at-nine}. We can hence in the following assume that $s(t') = o$ for all $t' \geq 9{:}00$.  Since $s(t+0{:}10) = c$, we can thus also conclude that $t + 0{:}10 < 9{:}00$.
Now define $s'$ to be identical to $s$ except that we let $s'(t') = c$ for all $t' \leq t$.
We want to show that $EU(s') > EU(s)$. First consider the arrival pair $\mathbf{t} = (t,t+0{:}10)$. We immediately get $u_{\bf t}(s') = u_c > u_m = u_{\bf t}(s)$.
To prove $EU(s') > EU(s)$, it hence now only remains to prove that $u_{\bf t'}(s') \geq u_{\bf t'}(s)$ for all $\mathbf{t'} = (t'_1,t'_2) \in \bf T$. The only non-trivial case is when $t'_i \leq t$ for some $i$ 
(in all other cases, $s'(\bf t') = s(\bf t')$). 
If $t'_i \leq t$ then either we also have $t'_{3-i} \leq t$ or else $t'_i = t$ and $t'_{3-i} = t+ 0{:}10$. The latter case has already been covered, so we only need to consider the case of $t'_1,t'_2 \leq t$. Then $s'(t'_1) = s'(t'_2) = c$, by definition of $s'$, so $u_{\bf t'}(s') = u_c$. Since $u_c$ is the highest possible payoff, we immediately get $u_{\bf t'}(s') \geq u_{\bf t'} (s)$, and we're done.
\end{proof}

\begin{lemma}\label{lemma:canteen-propagate} 
Let $s$ be a Pareto optimal strategy. For all arrival times $t < 8{:}50$, if $s(t)= c$ then $s(t+0{:}10) = c$.
\end{lemma}
\begin{proof}
  We again prove the contrapositive. So suppose
$s$ is a strategy with $s(t) = c$ and $s(t+0{:}10) = o$ for some $t < 8{:}50$. We need to show that $s$ is not Pareto optimal. We can assume $t$ to be odd, the other case being symmetric. Consider the strategy $s'$ that is like $s$ except $s(t+0{:}10) = c$. Note that $u_{\bf t}(s') = u_{\bf t}(s)$ when $\bf t$ doesn't contain $t+0{:}10$. To prove that $s'$ dominates $s$, we hence only need to prove that $u_{(t,t+0{:}10)}(s') + u_{(t+0{:}20,t+0{:}10)}(s') > u_{(t,t+0{:}10)}(s) + u_{(t+0{:}20,t+0{:}10)}(s)$.
We have $u_{(t,t+0{:}10)}(s) = u_m$ and $u_{(t,t+0{:}10)}(s') = u_c$, so we need to prove $u_c + u_{(t+0{:}20,t+0{:}10)}(s') > u_m + u_{(t+0{:}20,t+0{:}10)}(s)$. Since $s(t+0{:}10) = o$, we get $u_{(t+0{:}20,t+0{:}10)}(s) < u_c$ (the highest payoff of coordinating into the canteen cannot be reached). Since the lowest payoff is $u_m$, we now get $u_c + u_{(t+0{:}20,t+0{:}10)}(s') \geq u_c + u_m > u_{(t+0{:}20,t+0{:}10)}(s) + u_m$, as required.
\end{proof}

\begin{definition}
A \emph{cut-off strategy with cut-off $t'$} is a strategy $s$ with $s(t) = c$ for all $t < t'$ and $s(t) = o$ for all $t \geq t'$. 
The cut-off strategy with cut-off $t_{min}$ is also called the \emph{all-office strategy}.
\end{definition}
Note that the strategy we in the informal discussions above referred to as the ``canteen-before-9'' strategy is the cut-off strategy with cut-off $9{:}00$. We now get the result on optimal strategies claimed in the informal discussion.
\begin{theorem}\label{theorem:all-office-or-cut-off}
Any Pareto optimal strategy is either the all-office strategy or the cut-off strategy with cut-off $9{:}00$.
\end{theorem}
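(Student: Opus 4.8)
The plan is to exploit the decomposition already established in the excerpt: by formula~(\ref{equation:eu-split2}), a strategy profile is Pareto optimal on the full game if and only if its restrictions to the two disjoint subgames on $T_1$ and $T_2$ are each Pareto optimal, and the two subgames are isomorphic under swapping the players. Since the game is common-payoff, Pareto optimality coincides with maximizing the expected utility. So I would first classify the expected-utility-maximizing profiles on the single subgame $T_1$, then transfer the conclusion to $T_2$ by symmetry and recombine.

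On $T_1$ I would lay out the six arrival pairs as a single chain of seven ``slots'', one per (player, arrival-time) combination, namely $s_1(8{:}10),\, s_2(8{:}20),\, s_1(8{:}30),\, s_2(8{:}40),\, s_1(8{:}50),\, s_2(9{:}00),\, s_1(9{:}10)$, where each consecutive pair of slots is exactly one arrival pair in $T_1$ (the two times differing by $10$, with the earlier and later player alternating). Lemma~\ref{lemma:no-canteen-at-nine} forces the last two slots (times $\geq 9{:}00$) to play $o$ in any Pareto optimal profile. Lemma~\ref{lemma:no-c-after-o} says that within any arrival pair one cannot have the earlier slot play $o$ while the later slot plays $c$; since consecutive slots always belong to different players, this forbids the pattern ``$o$ then $c$'' anywhere along the chain. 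Hence any Pareto optimal profile is monotone of the form $c^k o^{7-k}$, that is, it is precisely the restriction to $T_1$ of a cut-off strategy profile, with $k \in \{0,1,\ldots,5\}$ (the values $k=6,7$ being excluded by Lemma~\ref{lemma:no-canteen-at-nine}).

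Next I would compute the expected utility of each candidate directly. For cut-off index $k$, reading along the chain shows that exactly one arrival pair straddles the cut-off and is therefore miscoordinated, while the $k-1$ pairs before it coordinate into the canteen and the $6-k$ pairs after it coordinate into the office; for $k=0$ all six pairs coordinate into the office. Writing $C > O > M$ for the canteen-, office-, and miscoordination payoffs (using constraint U1, and noting that the office coordinations and the single miscoordination falling under U2 near $9{:}00$ carry the same respective values), this gives $U(0) = 6O$ and $U(k) = (k-1)C + (6-k)O + M$ for $k \geq 1$. The telescoping differences are $U(1)-U(0) = M-O < 0$ and $U(k)-U(k-1) = C-O > 0$ for $2 \leq k \leq 5$. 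Thus $U(1) < U(2) < \cdots < U(5)$ while $U(1) < U(0)$, so the maximum over $k$ is attained only at the endpoints $k=0$ (all-office) or $k=5$ (cut-off $8{:}55$).

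Finally I would transfer this to the full game. By the player-swap symmetry the subgame on $T_2$ has the identical expected-utility-versus-cut-off profile, so the sign of $U(5)-U(0)$ --- which depends on the payoff magnitudes but is the same for both subgames --- forces both subgames to select the same endpoint, and the combined profile is then exactly the global all-office profile or the global cut-off-$8{:}55$ profile. I expect the main obstacle to be the expected-utility bookkeeping of the previous paragraph, in particular the careful treatment of the two arrival pairs near $9{:}00$ governed by U2 rather than U1: one must check that the boundary miscoordination created at $k=5$ and the office coordinations occurring after the cut-off carry the same payoff values as their counterparts before $9{:}00$, so that the clean telescoping $C-O$ survives all the way up to $k=5$. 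A minor additional point is to confirm that the two symmetric subgames cannot split their choice (outside the non-generic tie $U(0)=U(5)$), so that the recombined profile is genuinely one of the two named cut-off profiles.
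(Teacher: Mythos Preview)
Your approach is essentially the paper's: restrict to the subgame on $T_1$, line up the alternating player choices as a string, use Lemma~\ref{lemma:no-canteen-at-nine} to force the tail to $o$ and Lemma~\ref{lemma:no-c-after-o} to forbid the pattern $oc$, conclude the string is $c^k o^{\,\cdot}$, and then argue that among these cut-offs only the two extremes can be optimal. The paper does the last step qualitatively (``there is always exactly one miscoordinated pair, so maximizing the number of canteen-coordinated pairs is best''), whereas you compute $U(k)$ explicitly and telescope; these are the same argument written two ways.

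Two remarks. First, your computation with fixed constants $C>O>M$ is not actually implied by U1--U2 as stated, since those constraints are per arrival pair and allow $u_\mathbf{t}$ to vary with $\mathbf{t}$; under the general hypotheses one cannot conclude $U(k)-U(k-1)=C-O$, and in fact one can cook up U1/U2-compatible payoffs for which an interior cut-off beats both endpoints. You correctly flag this as the main obstacle, and the paper's proof glosses over the same point with the phrase ``arrival pairs of type~1 have a higher payoff than arrival pairs of type~3'', which likewise compares utilities across different $\mathbf{t}$. Both arguments are really using the (unstated but true in the actual experiment) assumption that payoffs depend on $\mathbf{t}$ only through whether some $t_i\geq 9{:}00$. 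Second, your observation that in the tie case $U(0)=U(5)$ the two subgames could in principle select different endpoints, producing a hybrid Pareto-optimal profile that is neither named profile, is a genuine edge case the paper does not address. Finally, note that the paper's statement and proof are phrased for general $\tmin\leq 8{:}50$ and $\tmax\geq 9{:}00$, whereas you instantiated the concrete seven-slot chain for $\tmin=8{:}10$, $\tmax=9{:}10$; the generalization is routine but worth mentioning.
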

\begin{proof}
Let $s$ be a Pareto optimal strategy. Then by Lemma~\ref{lemma:no-canteen-at-nine}, $s(t_{max}) = o$. Let $t_{cut}$ denote the earliest time with $s(t_{cut}) = o$. Then $s(t) =c$ for all $t_{min} \leq t < t_{cut}$. Suppose first that $t_{cut} > t_{min}$. Then $s(t_{min}) = c$. By Lemma~\ref{lemma:canteen-propagate}, we then get that $s(t) = c$ for all $t \leq 8{:}50$. By Lemma~\ref{lemma:no-canteen-at-nine}, we furthermore have that $s(t) = o$ for all $t \geq 9{:}00$. In other words, $s$ is the cut-off strategy with cut-off $9{:}00$.  Suppose alternatively that $t_{cut} = t_{min}$. Then $s(t_{min})=o$, and by  Lemma~\ref{lemma:no-c-after-o}, we then get $s(t) = o$ for all $t \geq t_{min}$. This is the all-office strategy. 
\end{proof}
The theorem proves what was argued in the main text: There are only two candidates for an optimal strategy, the all-office strategy or the canteen-before-9 strategy. This does not in any way imply that we should expect human players to adopt any of these two strategies, but if two perfectly rational players were to play the game, and if they knew they could expect the other player to play perfectly rational as well, and expect the other player to expect the same, and so on (i.e., we have common knowledge of perfect rationality~\cite{aumann1995backward}),
of course the optimal strategy would be played. And, as earlier mentioned, in our particular version of the game, the optimal strategy is the all-office strategy.




\section{Materials and Methods: Further Details}
\label{app:materials-methods-further}

\subsection{Mechanical Turk (MTurk) Experiment, Including Walkthrough and Screenshots}
\label{app:mturk-walkthrough} 


The MTurk experiments were approved by the the Research Ethics Committee at the Faculty of Humanities, University of
Copenhagen, Denmark on 22 February 2019. The approval stated: 
\begin{quote} 
The Research Ethics Committee at the Faculty of Humanities, University of
Copenhagen, has assessed your research experiment “The Canteen Dilemma“. Based
on the information you have provided in your Application for Ethical Approval, the
Committee has concurred that the project activities are in accordance with the relevant
International and Danish ethical guidelines and regulations. [...] The Committee hereby approves your project.
\end{quote} 
MTurk participants were recruited, and online experiments were conducted in the period from the 25th February 2019 to 1st March 2019. After accepting
our `Human Intelligence Task' (HIT) and providing informed consent, see screenshot in Fig.~\ref{fig:informed_consent}, 
\begin{figure} 
  \centering\includegraphics[width=0.7\linewidth]{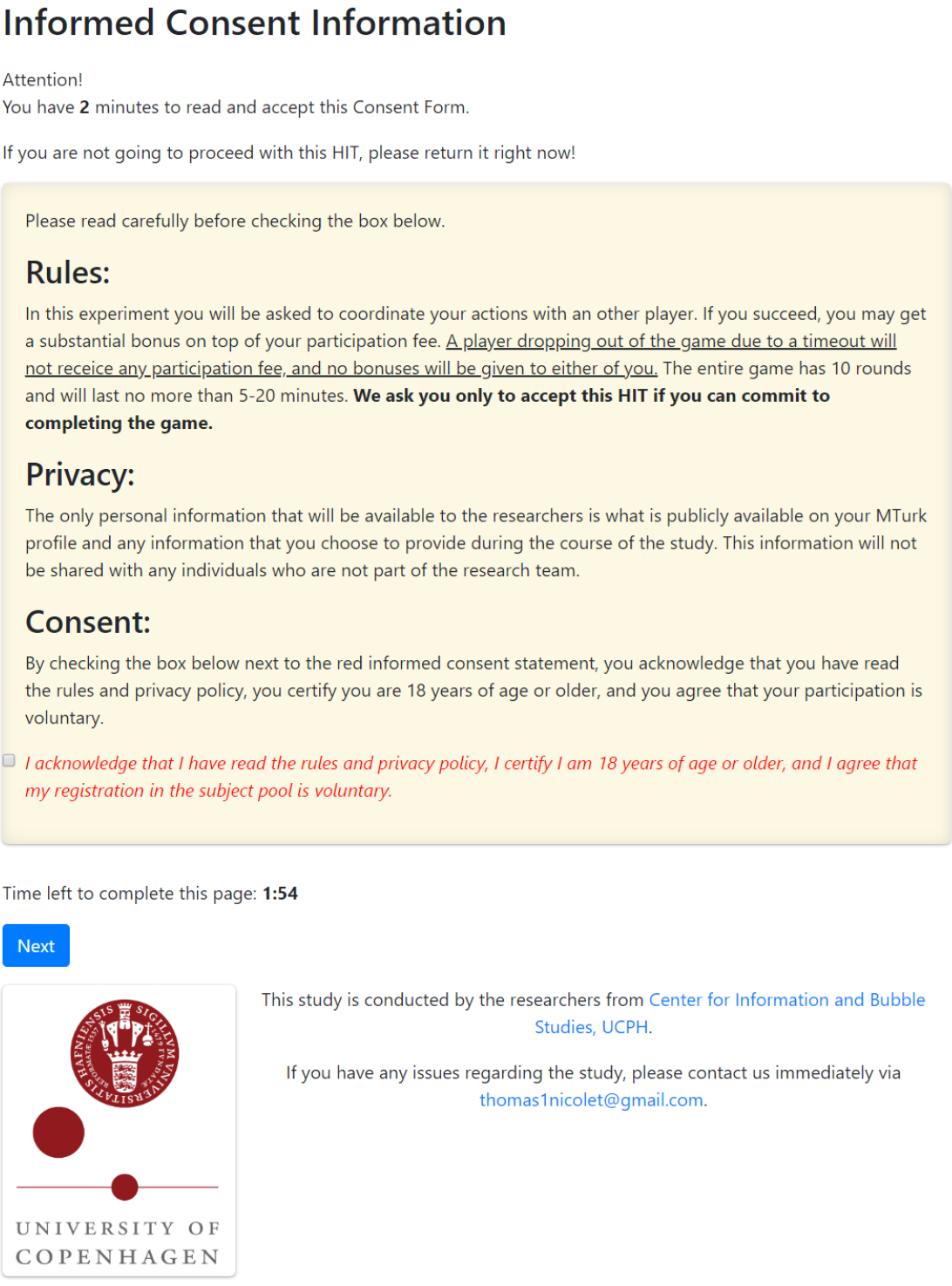}
\caption{Informed Consent Form.}
\label{fig:informed_consent}
\end{figure} 
participants from MTurk were put in a `waiting room' until they were paired up with another participant. After a group of two was formed, participants were directed to an initial instruction page which detailed the rules of the game with a time limit of 240 seconds, see screenshot in Fig.~\ref{fig:instructions}.
\begin{figure} 
  \centering\includegraphics[width=0.7\linewidth]{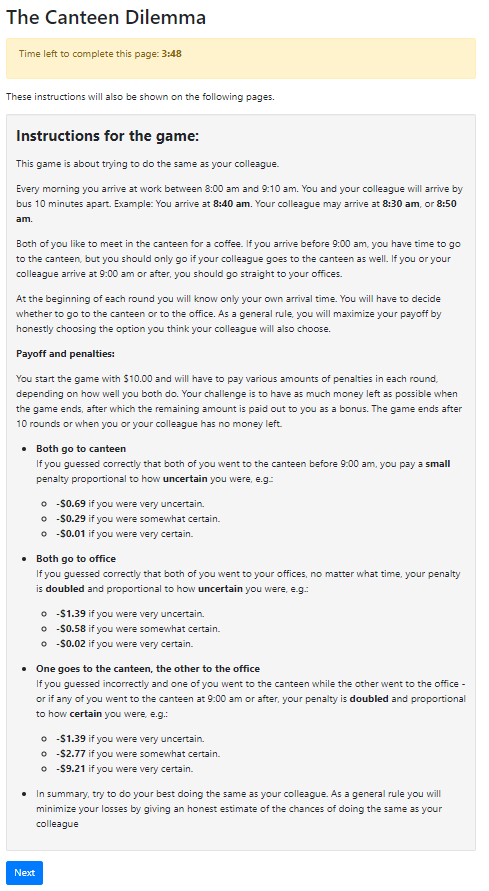}
  \caption{Screenshot of instructions page.}
  \label{fig:instructions} 
  \end{figure}
  \begin{figure}  
  \centering\includegraphics[width=0.7\linewidth]{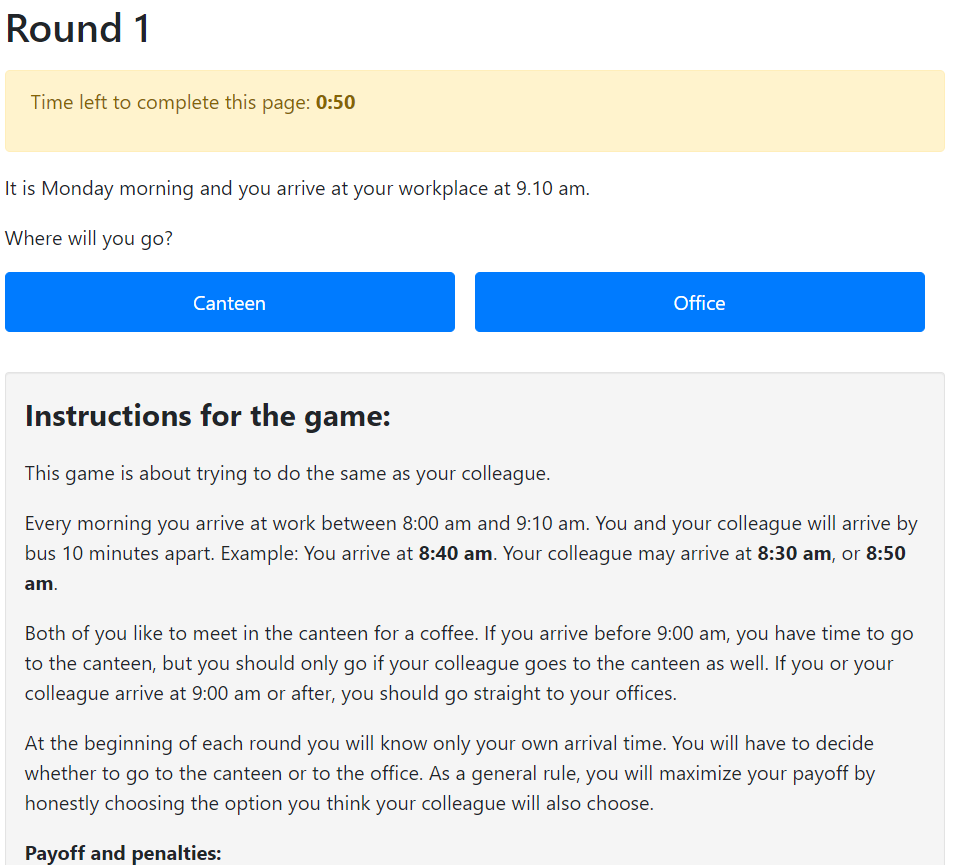}
\caption{Screenshot of choice page, round 1.}
\label{fig:round1}
\end{figure}
After reading the instructions, participants were directed to round 1 (of 10) where they were given their own arrival time and asked to make a decision between between going to the canteen or the office. Each round had a time limit of 61 seconds and rules from the instructions were repeated on the bottom of the page, see example screenshots for round 1 in Fig.~\ref{fig:round1}.
\begin{figure} 
  \centering\includegraphics[width=0.7\linewidth]{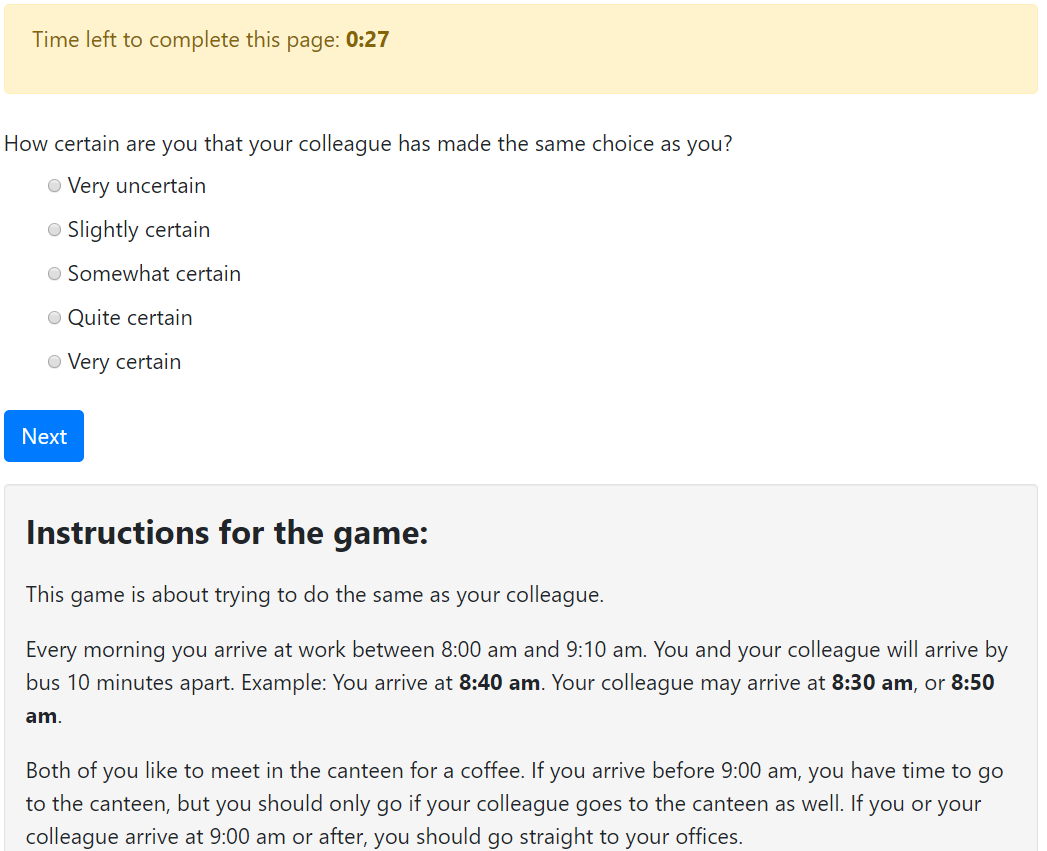}
  \caption{Screenshot of page where participants had to estimate the probability that their colleague would make the same choice.}
  \label{fig:certainty}
  \end{figure}
After making their decision (`Canteen' or `Office'), participants were asked to estimate how certain they were that the other player made the same choice as them, ranging from `very uncertain' over `slightly certain', `somewhat certain' and `quite certain' to `very certain', see Fig.~\ref{fig:certainty}.
\begin{figure} 
  \centering\includegraphics[width=0.7\linewidth]{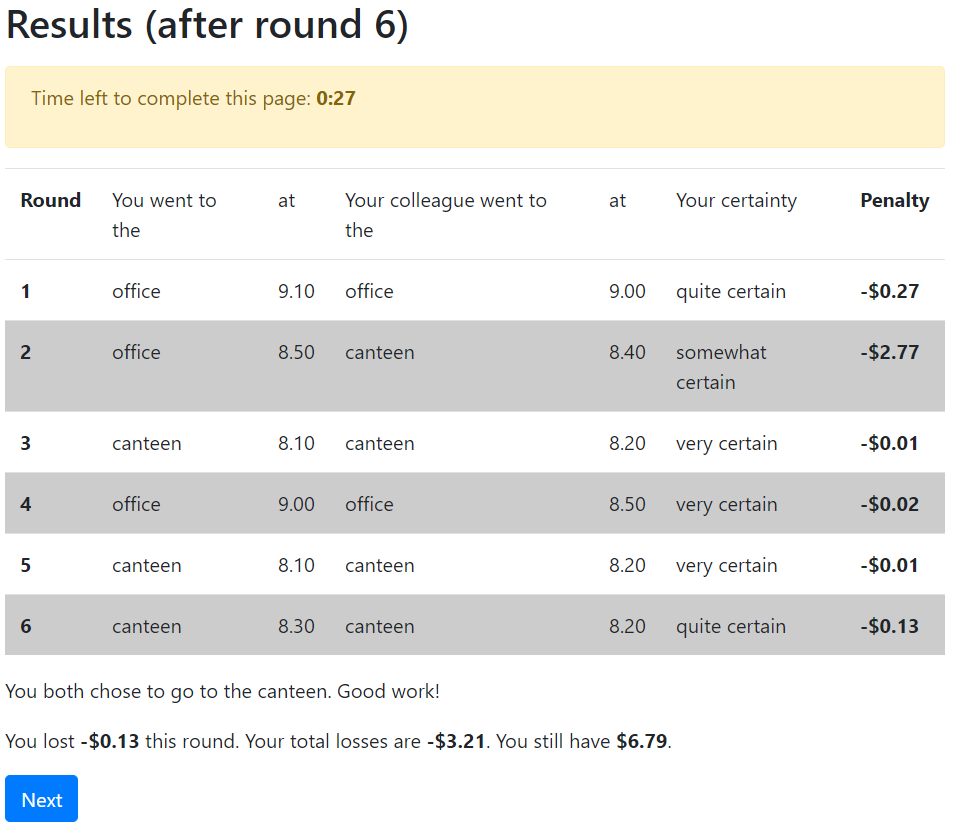}
  \caption{Screenshot of results page, round 6.}
  \label{fig:results6}
  \end{figure}
After both players have made their choices and their certainty estimates, they are prompted to a results page showing them the results of the previous rounds, including arrival times for both players, their choices, their own certainty estimate and resulting payoff, see example screenshot after round 6 in Fig.~\ref{fig:results6}. After 30 seconds, the game would automatically proceed to the next round. 
 
\begin{figure} 
  \centering\includegraphics[width=0.7\linewidth]{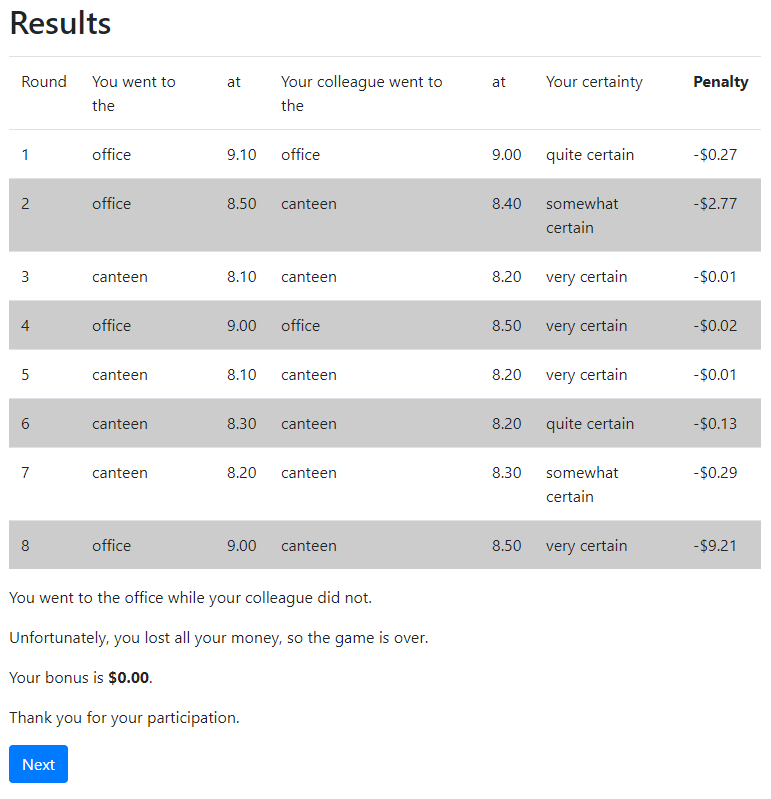}
  \caption{Screenshot after a player has lost all of her bonus.}
  \label{fig:round8finished}
  \end{figure}
In many instances, groups were not able to play the maximal number of rounds, because one or both of the participants had lost all their bonuses. An example of such a situation is shown in Fig.~\ref{fig:round8finished}. In such cases, the game would end for both players and they were asked to answer a follow-up question:

\begin{quote}
\indent
1. ``The game is over. Do you think it was your fault it is over, your colleagues fault, or do
you think it was because of some other reason?'' \textit{(Possible answers: `My fault', `Other's fault', `Other reason')}
\end{quote}
This question probes into participants' ability to rise above their possibly myopic understanding of the game. In addition to this question, all participants were asked three additional post-game questions about their strategy of play and their understanding of the game. The first was:
\begin{quote}
\indent
2. ``What strategy did you use while playing this game?" \textit{(open ended)}
\end{quote}
The answers to this question provided insight into the reasoning and thoughts of the participants. The next question was used to gauge the depths of recursive reasoning and reads:
\begin{quote}
\indent
3. ``Imagine you could have agreed beforehand with your colleague about a point in time where it is safe to go to the canteen. What time would that be?" \textit{(`I don't know', `There is no such time', 8:00, 8:10, 8:20, 8:30, 8:40, 8:50, 9:00, 9:10)}
\end{quote}
A final question pertaining all participant's understanding of the concept of common knowledge was the following:
\begin{quote}
\indent
4. ``Imagine you arrive at 8:10 am. Is it common knowledge between you and your colleague that it is safe to go to the canteen, that is, you both arrived before 9:00 am?''. \textit{(`Yes', `No', `Don't know')}
\end{quote}

\subsection{DTU Experiments}\label{app:dtu-experiments}
The DTU students were asked three additional post-game questions:
\begin{quote}
5. ``Did you ever go to the canteen at an arrival time later than what was safe according
to your previous answer? Why or why not?'' \textit{(open ended)}
\end{quote}
\begin{quote}
6. `Did you ever choose differently after seeing the same arrival time again at a later point
in the game? Why or why not?" \textit{(open ended)}
\end{quote}
\begin{quote}
7. `Imagine you arrived at [8:40/9:00] and you have been secretly informed that your colleague’s arrival time is 8:50. Where do you think your colleague will go?'' \textit{(`Canteen', `Office')}
\end{quote}
In the last question, half of the participants were given 8:40 as their own arrival time while the other half were given 9:00. The question concerns whether player’s own knowledge of the other’s arrival time affect their prediction of the other player’s decision. It relates to the curse of knowledge \cite{birch2007curse} since participants might attribute their own belief (that it is early enough or too late to go to the canteen) to the other player.

\subsection{MTurk Settings}
\label{MTurksettings}
Looking at Table \ref{table:A1}, the average payout to MTurk-workers was \$4.17 (including a general participation fee of \$2) which amounts to an average of more than \$20 per hour, which is considered very generous according to MTurk guidelines and certainly above the recently estimated average of \$6 per hour when excluding un-submitted and rejected work \cite{HaraEtAl18}. Students in the DTU experiments (DTU1 and DTU2) did not receive any monetary reward, but were told to try to maximize their payoff, and awarded prizes for doing well.

\begin{table} 
\centering
\begin{tabular}{lccccc}
\hline
Experiment   &   Participants &   Attrition rate &   N & Rounds & AvgPayout  (\$) \\
\hline
 MTurk 	& 714 &  0.02 &  680 & 10 & 4.36    \\
 DTU1  	& 106 &  0.13 &  80 & 30 & (prizes)   \\
 DTU2  	&   50 &  0.08 &  42 & 30 & - \\
\hline
\end{tabular}
\caption{The main experiment on Amazon Mechanical Turk (MTurk) had 714 participants of which 17 participants (2.4\%) quit prematurely, some of them quite early in the game. These quitters were told (in the consent form) that they would receive no bonus and no participantion fee. They are excluded from the data analysis. Their ``lucky'' colleagues however, got both their bonuses and participation fee, but are likewise excluded from the data analysis. Therefore the final number of subjects, $N$, is reduced by twice the attrition rate. In the two DTU experiments with students from the Technical University of Denmark (DTU1 and DTU2) attrition rates were slightly higher, mainly due to the higher number of rounds played.}
\label{table:A1}
\end{table}

Participants quitting a study before completing it is prevalent on MTurk, and varies systemically across experimental conditions \cite{ZhouFishbach16}. In our experiments on Amazon Mechanical Turk attrition rates were $2\%$, witnessing that we had managed to design the experiment in a way that minimized drop-out rates. A combination of high payouts, a logarithmic scoring rule taking advantage of loss aversion biases, a consent form stipulating the revocation of the participation fee after dropout, and minimization of waiting times may have been the main reasons.

All participants automatically received a `qualification' when accepting a HIT. This qualification ensured that participants could not play the game again. In addition, we required that participants should have had completed at least 500 HITs, have an accepted HIT rate of 98\% or above, and should be from the United States or Canada. This ensured that we would get relatively experienced and qualified participants.

MTurk participant attention was expected to be equal to or better than undergraduate participant's attention \cite{Rand2012}, while various forms of dishonesty (practical joking or trying to pair up with a friend) was expected to be rare, due to the high turnover rate experienced for our HITs. In addition, during the experiment, participants had easy access to our email for questions and possible bug reports. Apart from a few timeouts, participants had no comments or complaints.

\subsection{Full Rule Set for DIS Version of the Game}\label{app:dis-rules-full} 

This game is a 2-player cooperative coordination game with 15 rounds. The 2 players have to try to coordinate their actions, and if they are successful, they will both achieve the same positive reward (so it's not a zero-sum game). The positive reward can either be \$1 or \$2. If you're miscoordinated, you lose all the money you earned so far in the game. After the 15 rounds, you get to keep the rewards you have accumulated. Well, actually, in this version of the game, you will not be paid actual money, but you should try to behave as if you did, i.e., try to maximize the reward.

In each round, each player is dealt a face down card. On the face of the card there is a number in the interval 2--10 (it's a card from a deck of standard playing cards). The numbers on the two dealt cards are always exactly 1 apart, so if for instance one of the players gets a 3, then the other will either be getting a 2 or a 4. And since 10 is the maximum, if one player gets a 10, necessarily the other player is getting a 9. And similarly, if one gets a 2, necessarily the other gets a 3.  

When the cards are dealt, each player looks at the number on their own card without showing it to the other. The two players are not allowed to communicate or exchange any kind of information during the game. After each player has inspected their own card, they should hide either a white or a black marble stone in their hand and put the hand on the table. Each player has to make an individual choice without communicating with or revealing it to the other player. 

When both players have their closed hands with the hidden stones on the table, they wait for the facilitator to tell them that now they can open their hands to reveal the two stones, and they can also turn their cards to reveal their numbers.  

Both players then receive rewards depending on the colors of their stones and the numbers on their cards. The white marble stone is worth \$2, and the black marble stone is worth \$1, but you only get to keep your stones (money) if you choose the same stone as the opponent, and additionally, if you both choose the white stones, you only get to keep them if both card numbers are strictly below 9. In more details:  
\begin{enumerate}
\item[a)] If one player chose a white stone and the other a black stone, then they both lose all the stones (money) they have received in the game so far.
\item[b)] If both players chose a black stone, they both receive \$1 (i.e., a black stone).
\item[c)] If both players chose a white stone, then they both receive \$2 (i.e., a white stone), but only if both numbers are strictly below 9, meaning they are both in the interval 2--8. If not, the players lose all the money (stones) they have received in the game so far.
\end{enumerate} 

\section{Results} 

\subsection{Transcriptions of Qualitative Interviews with DIS Students}

\paragraph{Interview 1}
This is a transcription of an interview between Robin Engelhardt (R) and students (S) from the DIS experiment.

\medskip
R: .. Okay. I'll ask you some questions, and then maybe if I don't pin anybody of you out, just the first one who would like to answer answers. Okay, so the first question would be 
“Did you believe that your co players received the same rules as you yourself?” 

S: “Yes.“

R: “Right. That should be quite obvious. And that was the problem with the games online, because some people think, oh, yeah, but we can invent new rules for each player. They cannot know that the others know. Okay, next question:

R: “Were you uncertain about the rules and whether your co-payer had understood the rules?” 

S: “Yes.”

R: “Okay. How may we start with you? In what sense were you uncertain?” 

S: “I don't know. It took me reading it over a few times, I guess.” 

R: “Okay, and then you thought your coplayer might as well have some problems understanding it?”

S: “Yeah, kind of about being on the same page.“

R: “And when did you think you were okay, you both knew how the rules were? At what point? In which round?”

S: “One round. Because we both, like, played it and then realized this is why we did it.” 

R: “What about you? You said also yes.” No?”

R: “Okay. So, what about you? Did you think I didn't understand the rules? So did you think you were completely clear about the rules?” 

S: “I think we both understood it.” 

R: “Okay.” 

S: “You know, I wasn't uncertain.”

R: “You weren’t uncertain at all? Okay. That's the benefits of being in the same room, I guess. So, do you think it was your fault, the game, you miscoordinated? Or was it your partner's fault? Or do you think it was because of some other reason? So now anybody of you we miscoordinated once, we all miscoordinated at one point. Right. Whose fault was it? Whose error?”

S: “I don’t think it was anybody’s fault. It was the strategy?” 

S: “I don't think she ever made a decision that I would think was objectively a wrong decision.” 

R: “And you never thought that. You thought that about me at one point when I chose the black stone instead of the white one. You had a seven, I had a six. I could have had an eight. Okay, so now the more substantial. Yeah, but I would like to hear …”

S: “because we are trying to remember what happened, we're trying to figure out the scenario. One sort of scenario is neither of our faults. “

S: “We kind of end up developing the strategy that because we had so much so late in the game, might as well play it safe. “

R: “So, what was safe?”

S: “Black.“

S: “Yeah, black is safe, always. Right. Any combo. If both people have black.” 

R: “Yes. You didn't do that. Okay. Would you have done that? “

S: “You'd have to be able to communicate with your partner. Which technically we weren't supposed to do. “

R: “Yes, exactly.”

S: “Otherwise, you'd lose out on a lot before your partner figured out you were trying to do it. “

S: “If I was playing with somebody who I knew better and had played games with before and knew if they played more riskier, more safe, I think it would have been easier to know what we're doing without being okay. “

R: “When did you find out that the black was a safe strategy? “

S: “I think we always did. But I think the big thing, as we started to get more and more altogether, we didn't want to lose them all. So, then I think that's when it hit me that I was like, oh, I just play black.”

S: “In round eight, or something like that?”

S: “Yeah. Okay. I know we're in the same room as a bunch of other people playing it, but I would see that we had more than everyone else and I kind of wanted to stay on top of everyone else. “
S: “Okay, so it's due to the comparison of the others you found out maybe safe way is to always choose black even though it is a lower payoff.”

R: “Did you other guys think about that?” 

S: I thought that black is dangerous because if one person plays black, you lose everything because the other one probably chooses white.”

R: “So, what was your strategy then? 

S: “Also, to play most conservatively, but there's like near like the 7-8-9-7-8 range that's like then I usually play up for black. Below that I play for white.” 

R: “Okay, so let me understand. If you have an eight, you take black?”

S: “..and I also think like two spots in either direction compensate for what the partner would think. “

R: “Okay, so seven, you would choose black as well? “

S: “Yeah, I think so.”

R: “Six, you would choose white? At least that you did, I remember. “

S: “Seven you would have to think about they might have an eight and then…

S: “ But if they have a six…Right. If you play the seven, assuming that yeah,..”

S: “…it was kind of weird because I was thinking because with eight, statistically speaking, it makes more sense to pick black because if they have a nine, they're always going to pick black. If they have a seven, they might anticipate you, they might not. Technically, there's a chance that they choose black still. So if you look at it from peer statistics, it makes more sense to pick black eight, but by the same token, as a seven, it makes more sense to pick white because the 6 is basically always going to pick white, and on the 8th they might pick white, they might not.”

R: “Yeah…. Did you have an idea of what a safe strategy would be? What strategy did you use?”

S: “I don't know. I've decided, as I saw the plan, lower than a certain number, I would only use white. If it was like a higher number, I would think about it.”

R: “What was the cut-off number?”

S: “Seven.”

R: “So seven would be white?”

S: “I don't know.“

R: “Okay, I'll find out. Okay. So, it seems that most of you chose some kind of probability measure about what to choose when having a seven or eight. Is that true? Or did you think, okay, black is always safe, so why not black, for instance? “

S: “I don't think the money or is enough for me to want to care. The black was a million dollars and the white was like 2 million, like double. I would always choose black because I'm happy with a million dollars, but if it's one and $2, I don't care about one and $2. “

R: “Okay…”

S: “ so I'd rather have more than like ..:”

R: “But if you would have chosen black all the time from the beginning, if it were a million dollars, you might have a guy choosing white and you'll just bust. “

S: “Yeah, I know, but I'm saying I feel like if the money was more, I want to play, like, a safer strategy, but since the money is negligible, I don't care about selling the safe strategy. “ 

R: “Yeah, but my question is what is safe? What is the most safe strategy? Right. Even though you might know that black, if both could agree on always choosing black, it would be fine, but it's not sure. There still wouldn't be the most safe strategy, right? Because you don't know. Maybe the guy chooses white once in a while when he gets a two or three.”

S: “I feel like maybe if it's like 15 rounds, if I played like the first all ten rounds, all black, then he would get the hit and at least you'll get 5 million in the end and I'd be pretty happy walking away.”

S: “There's also the cut off ones where if you had a two or three, you would obviously play.  There'd be no sense in playing black. There's not even a chance that they have one that white wouldn't work out. If you had a two, then you know the other person is a three or three. You know they have two or four. At that point you would play white because you would do more because that's safe no matter what.”

S: “The safest is always play white, like two to five there's no risk involved in the numbers…. “ 

R: “There's absolutely no risk involved. … from five low or from where?”

S: “From five and lower.”

S: “Well, that's where I was thinking about that a couple of times because it gets a little weird in that. Like…”

R: “Do you personally have five and lower?”

S: “Except since you're saying there's risk involved in six, there's technically also risk involved in five. “

S: “Okay, if they have a seven, they might think you have an eight. That's where the risk comes. It doesn't go lower than that. If you have a six, you could be mistaken for having an eight, and that's where the risk comes from. If you have a five, you're not going to be mistaken for having any …”

S: “Yeah, but the point, too, is you could decide with that six that since you might be mistaken for having an eight, that you should go black.”

S: “Two to five is white, six to ten is black, except and then you always win.”

S: “No, because if you have a six and a five and you use that paradigm, you lose. “

S: “No, but it's if you personally have a five if I personally have a five, then I know that they either have a six or five. So even if I know the max, they can have six. Even when they're a six, they're still going…”

S: “ to but if the six then goes, okay, it makes sense for me to go black because they could think that I could have eight. “

S: “Then it doesn't work”

S: “No, the cut off is weird. I was thinking about that earlier. I didn't go super in depth, but I was like, yeah, if you dig into this deep enough, it could like, cascade and be like, well, at this cut off, they could think ..

S: “Ok, this it's not 100\% like they …”

R: “You used cut off at seven, right? We did both more or less right, though I failed once. Do you think? Now we are talking about cut off. It's very interesting because you were, at least in the start, sure that it would be completely safe with five below to take white. If you could have talked beforehand before the game started, what is our cut off? And you could have decided it is five, would you then always win?” 

S: “No, it's always possible.” 

S: “Having one number be 50 50. “

R: “Yeah, maybe that's a good idea. What about if you had a three? Do you think it is common knowledge that both of you are below eight?”

S: “Yes”

R: “..and common knowledge in the sense of common knowledge, in the sense of logic or in the sense only of common sense? “

S: “I think it's logic, because if you have a three, the max they have is four. So, the max they could think you have five, which is six. “

R: “So, you're studying logic. What is it called? AI. What is it? AI. So, you have a lot of logic with Thomas, so you understand the concept of coming out. You have heard about it. Everybody knows that. Everybody knows broadcasting messages. So, we expect everybody to know that. “

R: “Did you did you ever choose differently after seeing the same number again at a later point in time? So, imagine you chose seven. You chose white, getting a seven, which was white, and then later you get a seven and you chose black, something like that. So, getting the same number, choosing differently later, did we do that? “

S: “Yeah, I did that with you. After we lost, all of ours were low again. We were playing it safer and wanting to get more money when it was safe”.

R: “So, you played it safe. More black choices when you had a lot of money?”

S: “Yeah, because we didn't want to lose everything we got.”
 
R: “You did that?”

S: “The more money we got, the things that we got, and then when we lost it, we reset to just like ..”

R: “Okay, imagine you get a number, say, seven, and you have been secretly informed that your colleague's number is eight. What do you think?

S: “That happened to me…. she was thinking, and I had a seven, so I'm like, if she had a six, she wouldn't have been thinking. Yes…” 

R: “And so you read her, cold reading?”

S: “Hm..mm”

R: “Okay. And you said she has an eight, and she is contemplating black.“

S: “Yes.”

R: “And what happened?”

S: “I played a white. She played a black. Like, you would actually know what decision she will make. “

R: “Yes. Did you learn that choosing white with the number eight was dangerous? Yes? I guess everybody likes that. If yes, why was choosing white with the number seven not dangerous?”

S: “It was.”

R: “It was less dangerous?”

S: “Yeah, it's less dangerous because with eight, there is a 50\% chance that they're guaranteed to choose black with the seven there's a 50\% chance that they're almost guaranteed to choose white, because most of the time…white.”

S: “But logically, if they choose black with eight, it's also not very safe to choose white with seven. Right? “

S: “Yup. Seven, if you chose white, you know that your partner should choose white and could choose white. But the only thing you were going against, like, your partner. “

S: “Yeah. Your partner might think you have nine. Right.?”

S: ”….”

S: “Okay. And it's even less dangerous if you have a six?”

S: “Yes”

R: “Human psychology… Logically speaking, doesn't make sense, but because we just think in probabilities….”

S: “I think we were all playing simpler rather than overthinking. “

R: “That's what we do.” 
“
S: "Your partner would have to overthink so hard and take so long to do it to get to the point where it'd be worth it for you to put a black. “

R: “Maybe that's a good reason. So we don't expect others to overthink it, because we don't do it ourselves. So even though we might somehow have a hunch that having six means that the other guy has seven, meaning that he might think I have eight, meaning he might think that I think he has a nine,..”

S: “You can work your way all the way down.”

R: “Yeah. And that's why only black is safe in the end. Okay, thanks a lot. I think we're done. “

\bigskip
\paragraph{Interview 2}
This is a transcription of an interview between Thomas Bolander (T) and students S1, S2 and S4 in the DIS experiment.

\medskip
T: Did you get surprised sometimes about the choice of the co-player? 

S1: We had one moment, I think we where I just had not quite considered all the ramifications of… I think I had a six and you had a seven. 

S2: Yes. So my thinking was that possibly that they had had an eight, so they would think that I had a nine, so therefore I should play a black stone.

T: Yeah. 

S1: Whereas I was not thinking about what you would think I would think. And so I was like, oh, well, you'll either have a seven or a five. So that's what I should… and then I did not consider the psychology there.

T: But after that?

S1: After that, yeah. Then, like, we were much more, like, one of us has, like, a six or a seven. We were playing black stones very consistently. 

T: Right, but then would you still sometimes play a white stone? 

S1: I don't think we did.

S2: We never lost money after that. 

S1: Yeah, I think in that scenario, we played black stones each time.

S2: Yeah, so there were situations where, I remember one time, one of us had a seven, the other had an eight, and obviously we easily could have won \$4 instead of two. However, we both decided to play black stones in that situation because I think we both valued that, right? It would be better if we got two less dollars than if we lost all of our money. 

T: Yes, I agree. Yeah. If you want to maximize your reward. That seems sensible. But did you then… So was the reasoning kind of… ehm… 

S1: Just in any scenario where the other person could, like, even think that you had one of them think that you had a nine? 

S2: Yeah, or an eight, then we just played black. 

T: Okay, but then after you got this, would there be a situation where you would still have played white? Like, if you had a two, for instance? 

S1: Oh, yeah. I mean, we would still play white if it was like a two or three or four and a five.

S2: Up to six. 

S1: Yeah, up to six. 

T: Up to six? Okay. And you both thought in the same way?

S1: Yeah. 

S2: Yeah.

T: Okay. But is that then a good strategy to play to play white up to six? 

S1: I think so. If you're just trying to maximize, like, you're not going to get perfect score, but you're also not risking losing at all at that point, like losing all of your money.

T: But are you not risking it? 

S1: What do you mean? 

T: Okay, so this is kind of a strategy where you said up to this point, right, I'm playing white, and beyond that, I'm playing black. So the question is if it’s a safe strategy where you are not risking to lose all your money, or is it still a risky strategy? 

S1: It's a safe strategy, but you're not maximizing your reward.

T: Is it wise if it's six or below? Is that it?

S2: I think generally that’s how we were applying. 

T: But then if one of you gets a six and the other one gets a seven. 

S1: Well, if that person had a seven…

S2: I suppose that it just ended up working.

S1: Yeah, we didn't end up with that scenario, actually. 

T: No, that was your luck, then. 

S1: Actually, that did happen. We both played black. 

S2: Yeah. 

T: Okay. Yeah. 

S2: Where we have both had black? Because I think one of us thought that the other had.

S1: You know, I had the six. Oh, yeah. Because it was the time after I had initially fucked up and so then I was on a high alert of, like, oh, shit, don't want to do that again. We had won, like, one or two more rounds and then had that scenario, I think. 

T: Yeah. So you somehow changed your strategy a bit during the game?

S1: It was not that consistent. I was also trying to, like, I had realized, like, three or four rounds in also that I was just continually holding the black stone in my left hand and the white stone in my right hand and then realized, fuck, I can't do that, and then started shuffling the stones before I put one on the table.

T: Yeah, okay, okay. And what kind of strategy did you play? 

S3: I feel like it was a little weird because we got a lot of, like, four and five. We got very few that were sort of, like, on the boundary. And I think once we did, it kind of broke down because we got like, seven in a row that were like the highest card was like a six for like, seven rounds in a row. So we had all these two white stones built up and then it was like an eight and a seven, and we miscommunicated. 

T: And what is the boundary? 

S4: I think at the very first point where we miscommunicated because I had played, I believe, a white stone because I had a seven. And I was saying to myself, like, well, if I have an eight, then it's possible he has a nine. So until it's an eight, I'll play white. But I didn't think of the fact that he had an eight and would be thinking I could have a nine. And so then after that point, I had to start thinking more about whether you had. like, eight or sevens and then thought, I could have an eight and play black. And then I think that's where in the middle, we kind of started trying to figure out what our strategy was. 

S3: My thought process was, like, when you get an eight, it's like there's a 50\% chance that they have a nine and we'll guarantee black, and then there's 50\% chance they have a seven, but within that, there's a chance they'll play black and a chance they'll play white. So black felt like the safe option on an eight, just like because it's got better than 50\% odds was sort of my approach on that. 

T: Yeah, right, so can you say that again? So if you got an eight, is that it. 

S3: Yes. So if, like, I have an eight in my hand…

T: …and then they get a nine, and then you have to play and then they would necessarily play black. So in that case, you know, you would have to play black. But then you say if they get a seven, it's 50/50.

S3: Yeah. They might play black or white. So it's like them playing black is more likely when I have an eight was my thought process.

T: Yeah. That I see. But then if if you think they get seven, how do you deduce that then the the chance of them playing white is 50 and black 50? 

S3: I don't necessarily know 50/50. I just know there is some amount of chance they'll play black on a seven… +50\% of the time on a nine.

T: But did you also think in terms of probabilities? 

S1: Yeah.

S2: Yeah, I guess… I suppose I did. I mean, I think it was more to some extent. 

S4: I didn't think of probabilities. 

T: No. So what did you do? 

S4: I think I was the whole time just trying to think of a rule, at which point to stop playing white to start playing black.

T: And did you find the rule? 

S4: There isn't really one because then it would just keep going backwards and just play black all the time. To be 100\% safe. 

T: Yeah. But did you then consider to play black all the time? 

S4: Not once the actual game had started. I think you mentioned right before the game that you were also thinking maybe we could just play black the whole time – and it would have ended up better for us. But you feel like if you have safe cards, then you might as well play white. 

S1: It's like a sunk cost fallacy of, like… 

T: indeed it is a fallacy. Right. Because if you compute the expected reward, right? Then the problem is that the penalty of miscoordination is so high that it doesn't pay off. But was that clear, in a way? Let me also ask that: Was it clear from the rules that if you were, like, miscoordinated just once, then it would probably not pay off and it would be better to play just black/black all the time? 

S1: The way I was thinking about it was that, for me, I was considering halfway through the game, switching to just playing black, because once you make it past the halfway point in my mind is where it's worth considering that…

T: Okay, to have something left at the end, at least. 

S1: Yeah. Because if you do that early, like, if you fuck up early in the game… I should not curse for your interview.

T: No, that's fine. I will put it in the paper and I will have your name and address…. haha…. 

S1: I completely lost my train of thought. Oh, yeah. If you mess up early in the game, like, it's more recoverable because you have more than half of the game left. But once you have… If you have earned already half the game, then switching over I think is potentially beneficial. Like, if we could have communicated in advance, honestly, it's probably, like… or if we were to play another round and we're able to communicate in advance, I'd be like, let's figure out the vibes. And then if we get to the halfway point on this round, then we only play black for the rest.

S2: But also it's necessarily what needs to happen. Because if we were supposed Thomas gave us a four and a five, we obviously know that it's within the boundary. So at that point we could play. I think that it’s mainly…

S1: … It’s for the person that has a five. They could think that the other person has a six. And if they think the other person has a six, then they could think that that person has a seven and so on. 

S2: Nesting through all the four loops, if we go through that, the person has a five thinks the other person has a six, which means that the other person has a seven. Seven is within the boundary. 

S1: True. 

T: But then you can iterate once more, right? 

S2: Yeah. But then that would be null for the fact that the cards are a four and a five. 

T: Okay. So one question related to this. So imagine that you could before you started playing the game, you could have agreed with your co-player about a number, and then you say, okay, if it's below this, then you choose white, otherwise black. What would that number be? 

S1: Five. 

S2: No, I think it has to be six.

S1: But then you have the six and seven situation of when the person has seven, it's a little dicey.

S2: No, because if the person has seven. That means the other card is either six or eight. 

S2: Yeah, but if they're concerned that the other person has an eight, then they're also concerned that that person thinks they have a nine and will play black. Because whenever I had an eight, I would play black. 

S2: Yeah.

S1: And therefore if you're playing white there, it would be a mismatch. Right. 

S3: I don't think setting like a specific target would work. Frankly, just because it'll always kind of cascade out at a certain point. Like, oh, they could have this. And so I feel like more of like a loose rule, kind of like if you're confident, we'll lean on white and then after the halfway point, just like stick with black, I think would maybe be better. If you have a two, three, four, if it feels right, more than just, like, here's the hard limit. 

T: But if it's about what feels right, right? Then that might… Either what feels right is that always certain numbers feel right and certain doesn’t, right? And it's static. It's true all the way through the game or it changes over time. And if it changes over time, then I guess you're worse off than if it's fixed, right? Because you make yourself unpredictable to your co-player, right? And therefore it seems to me that you would have to say, okay, but then what feels right? It has to be somehow a fixed number, right? And then the question again is what number that is?

S3: But yeah, I mean, maybe like six. It's got to have a little bit of padding from the actual from nine, I feel like. 

T: Yeah, but the problem again is… The thing is, because, even if you agree on it, you say, okay, anything six or below that's black, right? Then there will be a situation where one has six and the other one has seven, and then you're miscoordinating. So in reality, and I think you also said it, right, at some point you said that it's cascading all the way. And I think you also said that you also talked about, like, playing all black, right? That you somehow, because…. That actually it is the only safe strategy, right? You cannot choose a cut-off point, because no matter what the cut off-point is, then one can be on one side of the cut-off point and the other on the other… so… Right, so, another question related to this. So imagine you get a three. Is it then is it then common knowledge between you and your co-player that it's safe to choose white? 

S1: Yeah. 

S2: Yes. 

S3: Yeah. 

T: Okay. Why? 

S3: Having a three or a four in your hand, I feel like those are both very confident. 

S1: There were two.

S3: Yeah, well, I was just sticking to the optimal… if you have a three and the other person has a four, that's where the question comes in. And I feel like even in that case, like a four is like miles away and so it's pretty, like, it's not going to cascade that far, I feel like.  

S1: What's the name of the psychology thing where they put people in a room and they're, like, there's a pile of money in this room, if all of you take it, like none of you win. Thing…

S3: Sounds like the marshmallow experiment. 

T: No, but it's different. No, maybe it's not. No, the marshmallow, then you get more marshmallows. If you don't eat it, you get more when you are done. But this is, like you said, about… all it depends on is whether everybody takes money. Is that it? 

S1: Yeah. So there's a pile of money in the room and if everyone wants to take a chunk of it, nobody gets the money. Whereas everybody… If all but one, or any other mixed people take money, as long as someone declines to take money, then other people get to have it. This game felt a bit like that in a weird way in my head. 

T: Okay. Yeah, no, but I guess all these coordination games is about trying to predict what the other player does, right? 

S1: Oh, it was the Milgram experiments… is what was going on in my head this entire time.

T: Yeah. Okay, so you talked about, like, it would be safe. It's around six or seven or something. It depends a bit. But you don't necessarily agree on that, then?

S4: I feel like you can't really set a limit. But there are cards like twos, threes and fours where it's just unspoken that you play white because there's no way that the other person will think you have a really high card.

T: Okay, right, okay, so you would say, like, two, is… that's safe… white? 

S4: Yeah. 

T: But then three is also safe white. And then…

S4: I know it's really contradictory, but every time we had fours and fives, we would always play white. We would never consider that the other person would think that many steps ahead. Or at least I didn't think that many steps ahead. 

T:  Okay, so you're somehow relying on… you think that there's a limit to how many iterations the other player will do. Is that it? That you somehow think that even if you understand that in principle, I can think about you thinking about me thinking about you thinking about me… I can continue…. then you would somehow naturally say, yeah, but that is unrealistic. Is that what you say? You wouldn't believe other people to do so.

S2: Most people aren't really thinking even one step ahead from their partner, let alone, like, two. 

S1: Yeah, I think I was capping it too unconsciously. I was like, the second you said that, I was like, yeah, I was assuming that you were thinking two steps ahead.

S2: Yeah, most people are probably thinking one to two steps ahead. I highly doubt a whole lot of people are thinking three. Three is very uncommon. I think anyone thinking more like four, five, six steps ahead is just kind of obsolete at that point. 

S1: They need to start playing instead. 

T: Okay, but let's say you were playing 100 rounds. That would be a bit boring, but let's say. Now after we discussed, what would you do? And you get the money. So it's money you get, and you can go out and celebrate afterwards. 

S3: So it's real money? All black. 

S2: Me too.

S1: Yeah. 

S3: Well, unless it's like two or three, all black.

T: But two and three is white? But the problem is then, if you get a three and you play white. So even if you agreed on this strategy, your co-player could still get a four.

S2: But for four, you can still play white.

S3: Yeah, I guess, like, if two iterations ahead of thinking is safe, then you're good. 

S2: You're stopping at, like, two. So suppose the person had five, the other person would either have to have six or a four. So if they had a six, that person would think, all right, either five or seven. So at that point, they know who you're with as well. 

T: Thanks a lot. Perfect. Thank you.

\bigskip
\paragraph{Interview 3}
Interview at DIS with Thomas Schrum Nicolet and four studentsS1, S2, S3 and S4 (S1 and S2 played together, as did S3 and S4)

\medskip
T: Did you believe your coplayers played with the same rules as you?

S1: Yeah.

S3: Yes.

S4: Yes.

T: Kind of like a standard assumption from the beginning?

S1: After the first loss yeah.

S3: Yeah.

S4: Yes.

T: By the first loss, do you mean like a strategy?

S1: The first time we put down different colored stones.

T: We are asking about the overall rules, did you play with the same rules?

S1: Oh yeah, yeah.

T: Were you uncertain about the rules, what you had to do, or uncertain about if your coplayer understood the rules?

S1: No.

S3: No.

T: You think it was pretty clear?

S1: Yeah.

S3: Yes.

T: Do you think it was your fault the game was over quickly, your partners fault, or something else?

S4: Mutual.

S3: It was mutual.

S1: Yeah, mutual.

T: What strategy did you have when you played this game, if you had a strategy?

S1: I played white for anything below 7, not a very risk tolerant strategy.

S3: Yeah, I think we started kind of aggresive with our whites, but once we lost first, we started being a little bit more conservative.

S3: I think if it is a 7 or lower I would put a white, and 8 was the only one where I was unsure.

T: So you [S3] did kind of 7 or lower is white, and for you [S1 + S4], 6 or lower would be white, so 7 or above would be black.

T: Did you have any kind of mixed strategy, like a probabilistic strategy? You [S3] 8 was up in the air, so that might be random?

S3: Yeah.

S4: Yeah.

T: And you [T1] had a maybe more determined strategy?

S1: We were determined, but I think if it had been closer between us and the next group, and we wanted to win, then probably...

S3: For the first three, we got 7-8, 7-8 and then 8-9. Even though it was random, we got too many 7-8s, so we're gonna see what's 8-9.

T: So when you're playing this game did you try to find the optimal choice or making probabilistic choices. I think we already like talked about this but did you try to like find some optimal solution, or something, like you say, you would have changed perhaps?

S1: I don't think I would have changed.

S2: No, I don't think I would have changed.

S3: No...

T: So imagine you could have agreed beforehand with your co player about a number where to do white and where to do black? What number would that be and does such a number exist?

S1: I would say 7.

S3: I would do 6.

S1: 6 or 7 yeah.

S4: I dont really think it exists per se because I think you could just move it anywhere and then if it's an eight it's between seven and nine but if you moved it down to seven you still don't know whether they have a six or an eight. I think it's hard to pick up. Because you still don't know which side of it it is on. I tried to think about that earlier on but I just assumed that you would just keep moving the line in either direction and that wouldn't really be helpful.

T: Okay

T: So imagine you get the number three, is it then common knowledge between you and your coplayer that it is safe to choose white?

S1: Yeah.

T: That you both have a number below 9?

S3: Yeah.

T: Did you ever choose white at a card number that was large than what was safe according to what you said earlier? You said you would choose 7?

S2: Do we?

S1: I think we chose seven once, because we got 7-8 twice before, so I was thinking we would get 7-6.

T: So you think that 7 wasn't safe, so you should choose a black stone, but you actually chose a white stone?

S1: Yeah.

T: Do you [S3 + S4] have any experience like that?

S3: Whenever I got like a 7 I did black.

T: So you didn't deviate from that?

S3: Not really, I think that was why we lost a couple of times.

T: Did you ever choose differently after seeing the same number again at a later point?

S1: Yeah we got 7-8 twice, so then I assumed we had 7-6

T: It's also hard to remember.

S3: I think on the third, like 8-9 or 7-8 one. We got 7-8 on the first two and then got 8-9 on the third one, so on the third one we both went for the black.

T: So you kind of miscoordinated in the beginning and then you changed...

T: I did your numbers [S1 + S2], so I remember you had 8-7 and then you miscoordinated in the beginning and then you had it later and I think you had both black then.

S1: Mh.

T: Did you learn that choosing white was dangerous when you get a number 8 ?

S3: Yes.

S4: Yes.

S1: Yeah I think that's how we lost.

T: Yeah, because your coplayer might get 9 and play black. Ok so, if you think that, why was choosing [black at] 7 not dangerous? 

... If you think 7 is also dangerous, so you might have picked black still at 7?

S1: Oh because if your partner has 8, then he could think you have 9.

T: Yeah, but then the question is, if you think that 7 is dangerous, why was choosing white with a 6 not dangerous?

S1: Ehm...

S3: Because 5 and 7...

S1: Yeah your partner can only have 5 and 7, and if he has 7, you can only have 6 or 8, so both are in range.

T: Yeah, but if you choose black at 7... and you think 6 is safe, so you choose white...

S1: Mhh.

S2: Mh.

S4: That's what I meant before why you can't pick one number that is safe, because you can just keep moving it down.

T: Yeah. So that was all the questions, unless you have anything else?

S1: Did you find an optimal number?

T: [Explains that no such number exists.]

\section{Additional supplementary material}

\subsection{Formal definitions of Private, Shared and Common Knowledge}
\label{A:definitions}
We can define the notion of common knowledge and related notions a bit more precisely as follows, following the conventions from epistemic logic (see e.g.\ Herzig and Mauffre~\cite{herzig2015share}). Given a proposition $p$ and an agent $i$, we use $K_i p$ to denote that agent $i$ knows $p$. Given a group of agents $G = \{1,\dots,m\}$, we say that $p$ is \emph{private knowledge} in $G$ if at least one of the agents know $p$, that is, if $K_i p$ is true for some $i \in G$. We use $E_G p$ to denote that everybody in $G$ knows $p$, that is, for all $i \in G$, it is true that $K_i p$. Whenever it is not necessary to be explicit about the group of agents $G$, we will just write $E p$ and say ``everybody knows $p$''. For all $n$, we then recursively define $E^n p$ to be shorthand for $E E^{n-1} p$, where $E^1 p$ is shorthand for $E p$. So for instance $E^2 p$ expresses that ``everybody knows that everybody knows that $p$'', and in general $E^n p$ means we have $n$ iterations of ``everybody knows that'' in front of $p$. We read $E^n p$ as ``everybody knows $p$ to depth/order $n$''. We also call this \emph{shared knowledge} (of $p$) to depth/order $n$, or \emph{$n$th-order shared knowledge}. When we say that $p$ is \emph{shared knowledge}, we mean that it is shared knowledge to depth $n$ for some $n \geq 1$. \emph{Common knowledge} of $p$ then means that $E^n p$ is true for all $n \in \mathbb{N}$.

In epistemic logic, the three notions---private, shared and common knowledge---are usually not considered to be mutually exclusive. So if $p$ is common knowledge, it is also automatically both shared and private, since when the conditions for $p$ being common knowledge are satisfied, also the conditions for it being shared and private are satisfied. However, in many cases, as in our paper, we want to make an exclusive distinction between the three types of knowledge. We can define $p$ to be \emph{shared knowledge only} if it is shared knowledge but not common knowledge. Thus, $p$ is shared knowledge only if for some $n$ we have $E^n p$ but not $E^{n+1} p$. Similarly, we can say that $p$ is \emph{private knowledge only} if $p$ is private but not shared knowledge. Thus, $p$ is private knowledge only if $K_i p$ holds for some, but not all, $i$. In most texts, as in ours, it is left implicit whether private and shared knowledge are interpreted inclusive or exclusive, that is, one doesn't explicitly distinguish between ``shared knowledge'' and ``shared knowledge only''. Normally it is clear from the context whether one intends the concept to be interpreted exclusively or inclusively. 

In our paper, we interpret the concepts exclusively, although we make an exception for private knowledge. When $p$ is known by all agents, we say that $p$ is both private and (first-order) shared knowledge. The exact border between private and shared knowledge vary significantly between different papers. De Freitas et al.~\cite{de2019common} consider the case $E p$ to still only be private knowledge, and for $p$ to be considered shared knowledge furthermore requires that there is at least one agent $i$ knowing $E p$ to be true (that is, requires $K_i E p$ to be true for some $i \in G$). The point of De Freitas et al.\ is that if only $E p$ is true, it is not really shared knowledge, but only private knowledge held by everyone in $G$. In our paper, we have sought a compromise between the terminology by De Freitas et al. and the standard terminology in epistemic logic, and hence we have the overlap between private and shared knowledge.

\subsection{Supplementary data analysis}
\label{app:supplementary-data-analysis} 
\begin{figure} 
\subfloat[\label{fig: logit mturk}]{%
  \includegraphics[width=.46\linewidth]{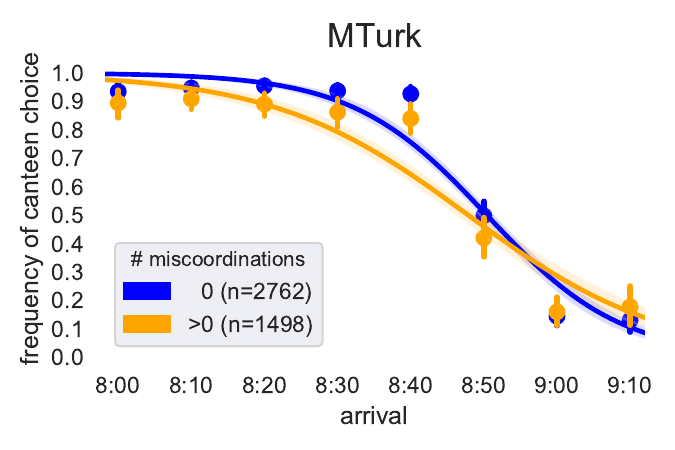}%
}\hfill
\subfloat[\label{fig: logit dtu}]{%
  \includegraphics[width=.46\linewidth]{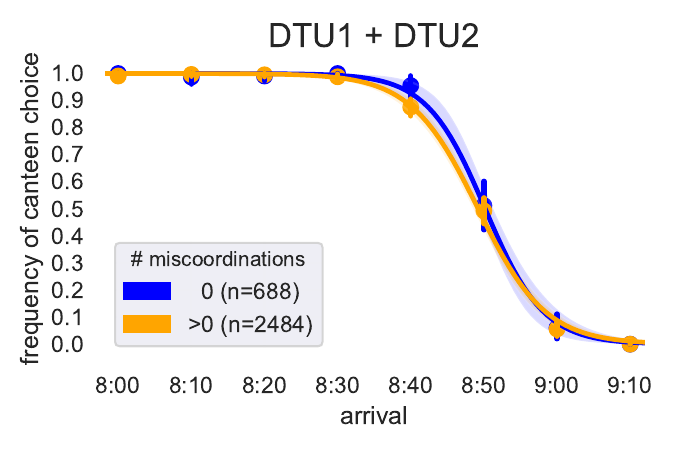}%
}\hfill
\caption{Participant's decisions of going to the canteen as a function of their arrival time, here partitioned into those groups who previous have experienced zero (blue) or one or more (orange) miscoordinations. MTurk participants are shown one the left, DTU students on the right}\label{fig:miscoordinations-appendix}
\subfloat[\label{fig: certain mturk}]{%
  \includegraphics[width=.46\linewidth]{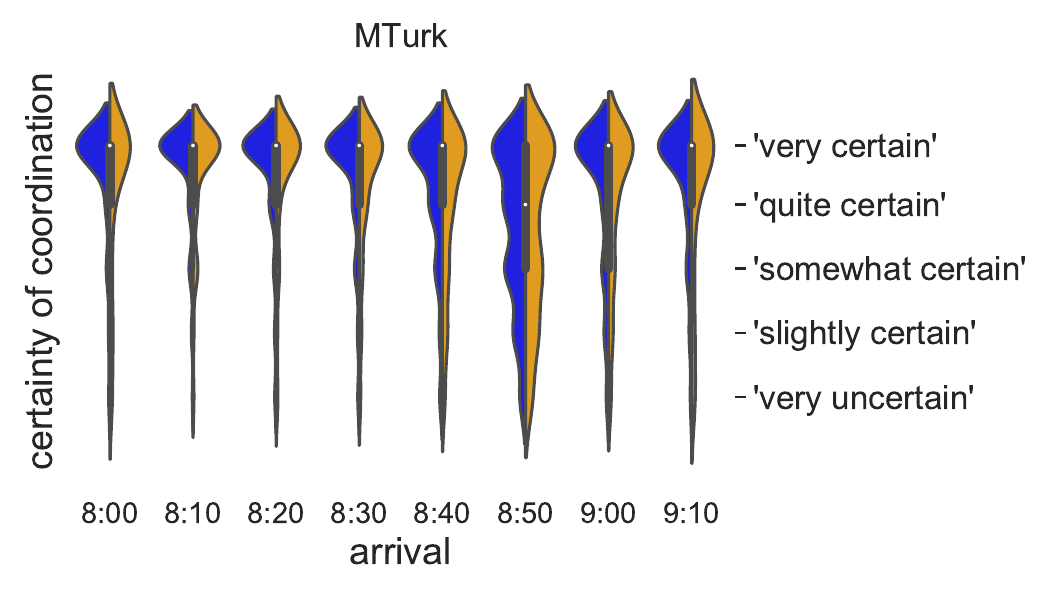}%
}
\subfloat[\label{fig: certain dtu}]{%
  \includegraphics[width=.46\linewidth]{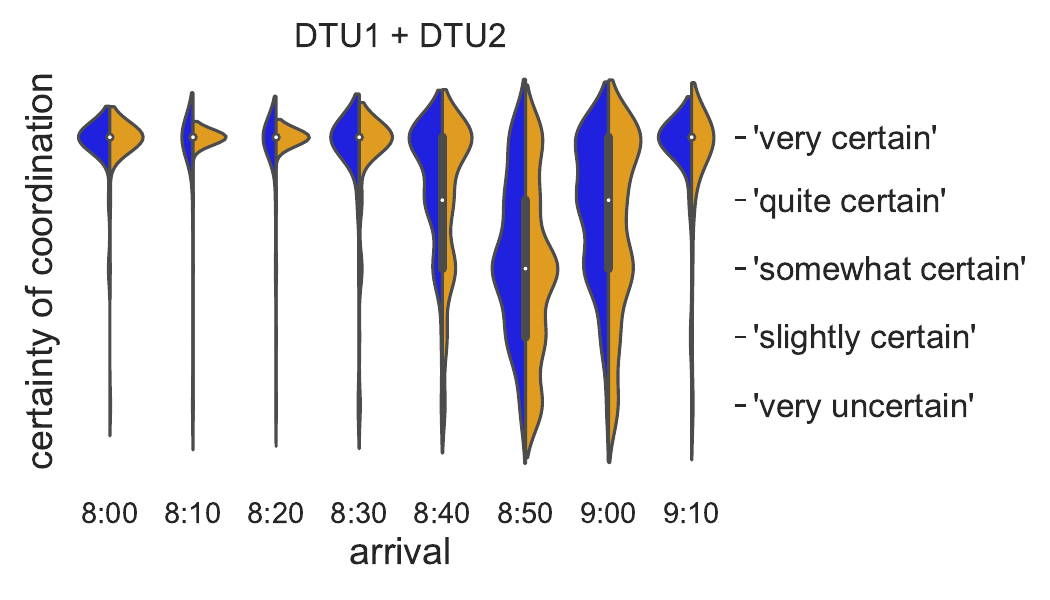}%
}
\caption{Participant's certainty estimates of being able to coordinate with their colleague as a function of their arrival time, partitioned into those groups who previous have experienced zero (blue) or one or more (orange) miscoordinations. MTurk participants are shown one the left, DTU students on the right}\label{fig:certainties}
\end{figure} 
When a group experiences a round of miscoordination, we expect some kind of learning to take place. `Why did my colleague choose differently than I did?' should be an obvious question a player asks herself, prompting deeper perspective-taking and possibly an understanding of the lack of common knowledge. We investigate this by partitioning decisions into those in which a participant never before has experienced a miscoodination with her colleague ($m=0$) and those in which a participant has experienced one or more miscoodinations ($m>0$). Furthermore, since the number of rounds - and hence miscoordinations - are very different for MTurk participants and DTU students, results from MTurk and DTU are shown separately, with the two DTU experiments combined. Results for the choices in Fig.~\ref{fig:miscoordinations-appendix} show significant differences between MTurk-groups having miscoordinated or not, while for DTU students those differences are not significant. Looking into the corresponding certainty estimates, as shown in Fig.~\ref{fig:certainties}, we see a similar pattern as before: Much steeper profiles among DTU students, but also much lower certainty estimates around critical arrival times. 

\section*{Author Contributions and Acknowledgments}

Thomas Bolander developed the game, made the theoretical analysis, and contributed as a lead author; Robin Engelhardt and Thomas S.\ Nicolet designed the experiment, analyzed the data, and contributed as lead authors. The authors declare no conflict of interest.

Server infrastructure and devops was handled by Mikkel Birkegaard Andersen. The authors wish to thank Vincent F. Hendricks for enabling the project. This research was approved by the Institutional Review Board at the University of Copenhagen and included informed consent by all participants in the study. The authors gratefully acknowledge the support provided by The Carlsberg Foundation under grant number CF 15-0212.


\bibliographystyle{plain}
\bibliography{../cd}

\end{document}